\documentclass{amsproc}

\usepackage[T2A]{fontenc}
\usepackage[cp1251]{inputenc}
\usepackage[english]{babel}

\usepackage{amsfonts}
\usepackage{amsthm}
\usepackage{amssymb}
\usepackage{amsmath}
\usepackage{amsopn}
\usepackage{bm}
\usepackage{array}
\usepackage{color}
\usepackage{mathdots}
\usepackage{graphicx}

\newtheorem{prop}{Proposition}
\newtheorem{theo}{Theorem}

\theoremstyle{definition}

\newtheorem{rem}{Remark}

\usepackage{cite}
\allowdisplaybreaks[4]

\DeclareMathOperator{\ad}{ad}
\DeclareMathOperator{\Ad}{Ad}
\DeclareMathOperator{\spanOp}{span}
\DeclareMathOperator{\tr}{tr}
\DeclareMathOperator{\dn}{dn}
\DeclareMathOperator{\cn}{cn}
\DeclareMathOperator{\sn}{sn}

\newcommand{\x}{\mathrm{x}}
\newcommand{\rmt}{\mathrm{t}}

\newcommand{\rmd}{\mathrm{d}}
\DeclareMathOperator{\modR}{mod}

\newcommand{\M}{\mathcal{M}}

\newcommand{\rmb}{\mathrm{b}}
\newcommand{\rma}{\mathrm{a}}

\newcommand{\B}{\textsf{B}}
\newcommand{\A}{\textsf{A}}
\newcommand{\Pp}{\textsf{P}}
\newcommand{\Qp}{\textsf{Q}}

\renewcommand{\H}{\textsf{H}}
\newcommand{\X}{\textsf{X}}
\newcommand{\Y}{\textsf{Y}}
\newcommand{\Z}{\textsf{Z}}

\DeclareMathOperator{\Complex}{\mathbb{C}}
\DeclareMathOperator{\Real}{\mathbb{R}}
\DeclareMathOperator{\Natural}{\mathbb{N}}
\DeclareMathOperator{\Integer}{\mathbb{Z}}

\DeclareMathOperator{\Jac}{\mathfrak{J}}

\DeclareMathOperator*{\res}{res}
\DeclareMathOperator{\const}{const}

\newcommand{\w}{\textsf{w}}

\newcommand{\mFr}{\mathfrak{m}}

\newcommand{\kFr}{\mathfrak{k}}
\newcommand{\I}{\mathcal{I}}
\newcommand{\J}{\mathcal{J}}

\DeclareMathOperator{\ReN}{\mathrm{Re}}
\DeclareMathOperator{\ImN}{\mathrm{Im}}

\title[Reality conditions for KdV and exact quasi-periodic solutions]{Reality conditions for the KdV equation and
exact quasi-periodic solutions in finite phase spaces}
\author{Julia Bernatska}
\address{University of Connecticut, Department of Mathematics}
\email{jbernatska@gmail.com}


\begin{document}
 
\maketitle

\begin{abstract}
In the present paper reality conditions for quasi-periodic solutions of the KdV equation
are determined completely. As a result, solutions in the form of non-linear waves
can be plotted and investigated. 

The full scope of obtaining finite-gap solutions of the KdV equation is presented. 
It is proven that the multiply periodic $\wp_{1,1}$-function on the Jacobian variety of a hyperelliptic curve
of arbitrary genus serves as the finite-gap solution, the genus coincides with the number of gaps.
The subspace of the Jacobian variety where $\wp_{1,1}$, as well as other $\wp$-functions, are 
bounded and real-valued is found in any genus.
This result covers every finite phase space of the KdV hierarchy,
and can be extended to other completely integrable  equations. 
A method of effective computation of this type of solutions is suggested, and illustrated in genera $2$ and $3$.
\end{abstract}



\section{Introduction}

The Korteweg---de Vries equation (KdV) arose in the $19$th century in connection with the theory
of waves in shallow water\footnote{The KdV equation was mentioned in the footnote on page 360
in Boussinesq, J., Essai sur la theorie des eaux courantes, Memoires presentes par divers savants, 
l'Acad. des Sci. Inst. Nat. France, XXIII (1877), pp.\,1--680. However, D. J. Korteweg  and G. de Vries,  (1895)
gave the full explanation in \cite{KdV}.}, see \cite{KdV}. The equation also describes the
propagation of waves with weak dispersion in various nonlinear media, see \cite{KK1971}. 
The conventional form of KdV is
\begin{equation}\label{KdVEq}
w_{\rmt} = 6 w w_{\x} + w_{\x\x\x}.
\end{equation}
This equation is scale-invariant, that is by scaling $\rmt$, $\x$, and $w$
one can change the coefficients of the three terms  arbitrarily.

The first two solutions: the one-soliton solution, and the simplest non-linear wave solution --- were
found by Korteweg and de Vries  in \cite{KdV}.

In \cite{GGKM1967} a remarkable procedure of finding solutions of KdV,
known as the inverse scattering method, was discovered. Although 
an elegant form of solutions was suggested, the problem was merely transformed into
the Gel'fand---Levitan integral equation.
Soon, it was shown that 
the KdV equation admits the Lax representation \cite{Lax1968},
and possesses a sequence of integrals of motion \cite{MGK1968}, which tends to be infinite. 
In \cite{ZF1971} it was proven, that the sequence of integrals of motion is infinite, 
and so KdV was called a \emph{completely integrable} hierarchy of hamiltonian systems. 
The notion of a \emph{finite-gap solution} arose in \cite{Nov1974}; such a solution
 lives on a finite-dimesional phase space. And within the infinite hierarchy,
a solution in any  dimension can be constructed.
Higher KdV equations were also introduced in  \cite{Nov1974}.

In \cite{Hir1971} a method of constructing multi-soliton solutions, known as Hirota's method,
was suggested. Soon after that,  these multi-soliton solutions were obtained by the inverse scattering method
in the case of no reflection of incoming waves \cite{GGKM1974}.

An alternative  method for solving the KdV equation, and a wide variety of other integrable equations,
comes from the theory of semi-simple Lie groups. 
An integrable equation arises within a hierarchy of hamiltonian systems 
on orbits of coadjoint representation of a loop  group,  
see in \cite{Adler78, MF1978,  Kos1979, AdlMoer80, Fom1981}
how the method was developed.
This method is known as the \emph{orbit approach}. It leads to algebraic integration,
and produces solutions in terms of functions which uniformize the spectral curve of
a hamiltonian system in question.

A finite-gap solution of the KdV equation was suggested in \cite{ItsMat1975}:
\begin{equation}\label{KdVSolLogTheta}
w(\x,\rmt) = 2 \partial_\x^2 \log \theta(\bm{U} \x + \bm{W} \rmt + \bm{D}) + 2c,
\end{equation}
in connection with the spectral problem for the Schr\"{o}dinger operator. 
The constant $c$ and constant vectors $\bm{U}$, $\bm{W}$,  $\bm{D}$ were not specified therein. In \cite{Dub1981}
a procedure of reconstructing $\bm{U}$ and $\bm{W}$ was suggested, while the constant $c$ was omitted.
A solution in the form \eqref{KdVSolLogTheta} is defined more accurately  in  \cite[p.\,65--66]{bbeim1994}.

Another form of an exact  finite-gap solution, see \cite{belHKF}, 
 is given by a multiply periodic $\wp_{1,1}$-function, defined on the Jacobian variety of a hyperelliptic curve of any genus, 
 the genus coincides with the number of gaps. 
 The latter solution sheds light on constant quantities in \eqref{KdVSolLogTheta},
 see Remark~\ref{r:Comparison} for more details.
  
 A solution in the form of $\wp_{1,1}$-function is complex-valued and satisfies
 \eqref{KdVEq} with arbitrary complex $\rmt$ and $\x$.  
Until now, real conditions remained an open problem for this type of solutions. 
In the present paper the problem of real conditions for $\wp_{1,1}$-solution is solved completely.
Simultaneously, the locus in the Jacobian variety where $\wp$-functions are bounded and real-valued is found
in the case of a curve with all real branch points.

In the recent decades, efficient computation and
graphical representation of the KdV solutions have been addressed. 
In \cite{FK2006} solutions of the KdV and KP equations on hyperelliptic curves of genera $2$, $4$, $6$
are computed in Matlab.  Solutions of the form \eqref{KdVSolLogTheta}
are used, and periods are calculated by expanding the integrands as a series of Chebyshev polynomials, 
and then integrating the polynomials in an appropriate way. This method of
 spectral approximation was introduced in \cite{FK2004}, as an alternative to the known 
 numerical tools of studying theta-functional solutions. The most popular among the latter is 
 the package \texttt{algcurves}  in Maple  \cite{DP2011}, which calculates 
characteristic quantities of Riemann surfaces, such as homology basis, not normalized period
 matrices, and the Riemann period matrix. The most recent development in this direction is 
 a new package of computing theta functions and  derivatives in Julia \cite{AC2021}. 
 Therein a review of the existing computational tools is presented.
 
Solutions of the KdV and mKdV equations in terms of $\wp_{1,1}$-function 
are computed and graphically represented in  \cite{Mats2023,MatsDNA2022}.
Function $\wp_{1,1}$ is expressed in terms of a divisor on a hyperelliptic curve, 
which gives a solution of the Jacobi inversion problem. 
Solving the equation is reduced to computing the Abel image of this divisor
by means of Euler's  numerical quadrature. A path in the 
Jacobian variety is constructed numerically to satisfy the reality condition for the solution.
This method avoids computing period matrices.

In the present paper, an analytical method of computing $\wp$-functions is used,
see \cite{BerCalc24} for more details. 
The method is based on analytical construction of the Riemann surface of a curve.
It produces period matrices of the first and second kinds, required for computation
of $\wp$-functions, and allows to compute the Abel image of any point of the curve.
The method was developed to supply
 $\wp$-functions with appropriate and convenient computational tools.
Wolfram Mathematica, designed for symbolic computation, is used for calculation
and graphical representation.

The  paper presents the full scope of obtaining  finite-gap solutions of the KdV equation.
We start with constructing the hierarchy of hamiltonian systems of the KdV equation, 
then explain algebraic integration in detail,
and prove that the $\wp_{1,1}$-function serves as a finite-gap solution in any genus. 
From the hierarchy we find the accurate relation between the arguments of
$\wp_{1,1}$ and variables $\rmt$ and $\x$. Then we find the domain in the Jacobian variety where
$\wp_{1,1}$ is bounded and real-valued. This result is proven in arbitrary genus.
Finally, quasi-periodic solutions of the KdV equation in genera $2$ and $3$ are illustrated by plots.

The paper is organized as follows. In Preliminaries we recall all notions related to uniformization
of hyperelliptic curves: the standard not normalized differentials of the first kind 
and associated to them differentials of the second kind,
Abel's map, the theta and sigma functions, the Jacobi inversion problem, 
and also briefly explain the construction of hamiltonian systems on coadjoint orbits of a loop group.
In section~\ref{s:KdVHier} hamiltonian systems which form the KdV hierarchy are developed.
Section~\ref{s:SoV} is devoted to separation of variables. In section~\ref{s:AGI} 
algebro-geometric integration is explained in application to the KdV hierarchy.
Finally, section~\ref{s:RealCond} presents new results on finding bounded and
real-valued quasi-periodic solutions of the KdV equation in any finite phase space.
In section~\ref{s:NLW}  a method of effective computation of $\wp$-functions is presented, 
and quasi-periodic solutions
are illustrated in genera $1$, $2$, and $3$.

\section{Preliminaries}
\subsection{Hyperelliptic curves}\label{ss:HC}
Let a non-degenerate hyperelliptic curve $\mathcal{V}$ of genus $g$ 
be defined\footnote{A $(2,2g+1)$-curve serves as a 
canonical form of hyperelliptic curves of genus $g$.} by
\begin{equation}\label{V22g1Eq}
f(x,y) \equiv -y^2 + x^{2 g+1} + \sum_{i=1}^{2g} \lambda_{2i+2} x^{2g-i} = 0.
\end{equation}

Let $(e_i,0)$, $i=1$, \ldots, $2g+1$, be finite branch points of the curve \eqref{V22g1Eq}.
In what follows, we denote branch points simply by $e_i$.
Homology basis is defined after H.\,Baker \cite[\S\,200]{bakerAF}.
One can imagine a continuous path through all branch points, which starts at infinity and ends at infinity,
see the orange line on fig.~\ref{cyclesOdd}. 
\begin{figure}[h]
\includegraphics[width=0.6\textwidth]{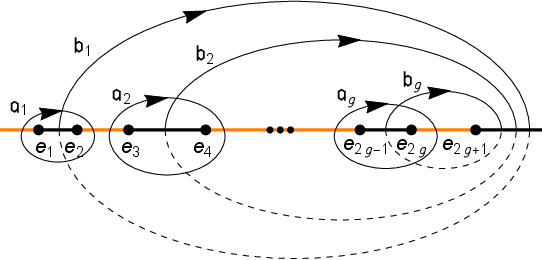} 
\caption{Cuts and cycles on a hyperelliptic curve.} \label{cyclesOdd}
\end{figure}
The branch points are enumerated along the path.
Fig.~\ref{cyclesOdd} represents the case of all real branch points, on which  the present paper focuses.
Cuts are made between points $e_{2k-1}$ and $e_{2k}$ with $k$ from $1$ to $g$. 
One more cut starts at  $e_{2g+1}$ and ends at infinity.
Canonical homology cycles are defined as follows.
Each $\mathfrak{a}_k$-cycle encircles the cut $(e_{2k-1},e_{2k})$, $k=1$, \ldots $g$,
and each $\mathfrak{b}_k$-cycle emerges from the cut $(e_{2k-1},e_{2k})$ 
and enters the cut $(e_{2g+1},\infty)$, see  fig.~\ref{cyclesOdd}.

Let $\rmd u = (\rmd u_1,\rmd u_3, \dots, \rmd u_{2g-1})^t$ be not normalized differentials
of the first kind, and $\rmd r = (\rmd r_1,\rmd r_3, \dots, \rmd r_{2g-1})^t$ be
differentials of the second kind associated with the first kind differentials, 
see \cite[\S\,138]{bakerAF} for more detail.
Actually, 
\begin{subequations}\label{Diffs}
\begin{align}
& \rmd u_{2n-1} =  \frac{x^{g-n} \rmd z}{\partial_y f(x,y)},\quad n=1,\dots, g,\label{K1DifsGen} \\
& \rmd r_{2n-1} =  \frac{ \rmd x}{\partial_y f(x,y)} \bigg(x^{g+n-1} 
+ \sum_{j=1}^{2n-2} (2n-1-j) \lambda_{2j} x^{g+n-1-j} \bigg), \label{K2DifsGen}
\end{align}
\end{subequations}
where $\partial_y f(x,y) = -2 y$. The indices of $\rmd u_{2n-1}$ display the orders of zeros, 
and the indices of $\rmd r_{2n-1}$ display the orders of poles.

Let  not nomalized periods along the canonical cycles $\mathfrak{a}_k$, $\mathfrak{b}_k$, $k=1$, \ldots, $g$, 
be defined as follows
\begin{gather}\label{NNormPerMatr1}
  \omega_k = \oint_{\mathfrak{a}_k} \rmd u,\qquad\qquad
  \omega'_k = \oint_{\mathfrak{b}_k} \rmd u.
\end{gather}
The vectors $\omega_k$, $\omega'_k$ 
form  first kind period matrices $\omega$, $\omega'$, respectively.

The corresponding normalized period matrices are $1_g$, $\tau$,
where $1_g$ denotes the identity matrix of size $g$, 
and $\tau = \omega^{-1}\omega'$.  Matrix $\tau$ is symmetric with a positive imaginary part: 
$\tau^t=\tau$, $\ImN \tau >0$,
that is $\tau$ belongs to the Siegel upper half-space. 
The normalized holomorphic differentials are denoted by
\begin{gather*}
 \rmd v = \omega^{-1} \rmd u.
\end{gather*}

\subsection{Abel's map}
The vectors $\omega_k$, $\omega'_k$  serve as generators of a period lattice $\mathfrak{P}$.
Then $\Jac = \Complex^g/\mathfrak{P}$ is the Jacobian variety of the curve \eqref{V22g1Eq}.
Let $u=(u_1,u_3,\dots,u_{2g-1})^t$ be a  point of $\Jac$.

Let  the Abel map be defined by
\begin{gather*}
 \mathcal{A}(P) = \int_{\infty}^P \rmd u,\qquad P=(z,y)\in \mathcal{V}.
\end{gather*}
The Abel map of a positive divisor $D = \sum_{i =1}^n (x_i,y_i)$ is defined by
\begin{gather*}
 \mathcal{A}(D) = \sum_{i =1}^n \int_{\infty}^{(x_i,y_i)} \rmd u.
\end{gather*}
The map is one-to-one on the $g$-th symmetric power of the curve:
 $\mathcal{A}: \mathcal{V}^g \mapsto \Jac$.

\subsection{Theta function}
The Riemann theta function  is defined by 
\begin{gather}\label{ThetaDef}
 \theta(v;\tau) = \sum_{n\in \Integer^g} \exp \big(\imath \pi n^t \tau n + 2\imath \pi n^t v\big).
\end{gather}
This function is supposed to be related to the curve \eqref{V22g1Eq}, 
it depends on the normalized coordinates
 $v = \omega^{-1}u$, $u\in \Jac$, and periods $\tau = \omega^{-1}\omega'$.
 Let
\begin{equation}\label{ThetaDefChar}
 \theta[\varepsilon](v;\tau) = \exp\big(\imath \pi (\varepsilon'{}^t/2) \tau (\varepsilon'/2)
 + 2\imath \pi  (v+\varepsilon/2)^t \varepsilon'/2\big)  \theta(v+\varepsilon/2 + \tau \varepsilon'/2;\tau)
\end{equation}
be the theta function with characteristic $[\varepsilon]= (\varepsilon', \varepsilon)^t$.
A characteristic $[\varepsilon]$ is a $2\times g$ matrix, all components of $\varepsilon$, and $\varepsilon'$
are real values within the interval $[0,2)$. Modulo ($\modR$) $2$ addition
is defined on characteristics.

Every point $u$ within a fundamental domain of the Jacobian variety $\Jac$ 
can be represented by its characteristic $[\varepsilon]$
defined as follows
\begin{equation*}
u =  \tfrac{1}{2}  \omega \varepsilon +  \tfrac{1}{2}  \omega' \varepsilon'.
\end{equation*}
Abel's images of branch points are described by 
characteristics with integer components, as well as Abel's image of any combination of branch points.
An integer characteristic $[\varepsilon]$ is odd whenever $\varepsilon^t \varepsilon'  = 0$ ($\modR 2$), 
and even whenever $\varepsilon^t \varepsilon' = 1$ ($\modR 2$). A theta function with characteristic
has the same parity as its characteristic.

\subsection{Sigma function}\label{ss:SigmaFunct}
The modular invariant entire function on $\Jac$ is called the sigma function.
In the present paper we define it by the relation with the  theta function:
\begin{equation}\label{SigmaThetaRel}
\sigma(u) = C \exp\big({-}\tfrac{1}{2} u^t \varkappa u\big) \theta[K](\omega^{-1} u;  \omega^{-1} \omega').
\end{equation}
Note, that the sigma function is defined in terms of not normalized coordinates $u$, and associated with
not normalized period matrices of the first kind $\omega$, $\omega'$, and  of the second kind $\eta$, $\eta'$.
The latter matrices are formed by the vectors 
\begin{gather*}
  \eta_k = \oint_{\mathfrak{a}_k} \rmd r ,\qquad\qquad
  \eta'_k = \oint_{\mathfrak{b}_k} \rmd r,
\end{gather*}
respectively. Then $\varkappa = \eta \omega^{-1}$ is a symmetric matrix.

In what follows we use multiply periodic $\wp$-functions 
\begin{gather*}
\wp_{i,j}(u) = -\frac{\partial^2 \log \sigma(u) }{\partial u_i \partial u_j },\qquad
\wp_{i,j,k}(u) = -\frac{\partial^3 \log \sigma(u) }{\partial u_i \partial u_j \partial u_k}.
\end{gather*}

For constructing series representation of the sigma function see \cite{bl2008}.

\subsection{Jacobi inversion problem}
A solution of the Jacobi inversion problem on a hyperelliptic curve is proposed in  \cite[Art.\;216]{bakerAF},
see also \cite[Theorem 2.2]{belHKF}.
Let $u = \mathcal{A}(D )$ be the Abel image of  a non-special positive divisor  
$D \in \mathcal{V}^g$. Then $D$ is uniquely defined by the system of equations 
\begin{subequations}\label{EnC22g1}
\begin{align}
&\mathcal{R}_{2g}(x,y;u) \equiv x^{g} -  \sum_{i=1}^{g} x^{g-i}  \wp_{1,2i-1}(u) = 0, \label{R2g}\\ 
&\mathcal{R}_{2g+1}(x,y;u) \equiv 2 y + \sum_{i=1}^{g} x^{g-i}  \wp_{1,1,2i-1}(u) = 0. \label{R2g1}
\end{align}
\end{subequations}
Here and below $\mathcal{R}_{n}$ denotes an entire rational function of weight $n$  on the curve.

\subsection{Characteristics and partitions}
Let $\mathcal{S} = \{0,1,2,\dots, 2g+1\}$ be the set of indices of all branch points of a 
hyperellipitic curve of genus $g$, and $0$ stands for the branch point at infinity.
According to \cite[\S\,202]{bakerAF} 
all half-period characteristics are represented by 
partitions of $\mathcal{S}$ of the form $\I_\mFr\cup \J_\mFr$ with $\I_\mFr = \{i_1,\,\dots,\, i_{g+1-2\mFr}\}$
and $\J_\mFr = \{j_1,\,\dots,\, j_{g+1+2\mFr}\}$, where $\mFr$ runs from $0$ to $[(g+1)/2]$, 
and $[\cdot]$ denotes the integer part. 
Index $0$, corresponding to infinity, is usually omitted in sets, 
it is also omitted in  computation of cardinality of a set.

Denote by $[\varepsilon(\I)] = \sum_{i\in\I} [\varepsilon_i]$ $(\modR 2)$ the characteristic of
\begin{gather*}
 \mathcal{A} (\I) = \sum_{i\in\I} \mathcal{A}(e_i) 
 = \omega \Big(\tfrac{1}{2} \varepsilon(\I)  + \tfrac{1}{2} \tau \varepsilon'(\I) \Big).
\end{gather*}
Characteristics of $2g+1$ branch points serve as a basis for constructing
all $2^{2g}$ half-period characteristics.
Below a partition is  referred to by the part of less cardinality, denoted by $\I$.

Introduce also characteristic 
$$[\I] = [\varepsilon(\I)] + [K]$$ 
of $\mathcal{A} (\I) + K$,
where $K$ denotes the vector of Riemann constants.
Characteristic $[K]$ of the vector of Riemann constants equals 
the sum of all odd characteristics of branch points, there are $g$ such characteristics,
see \cite[\S\,200, 202]{bakerAF}.
In the basis of canonical cycles introduced by fig.~\ref{cyclesOdd} we have
\begin{gather*}
 [K] = \sum_{k=1}^g [\varepsilon_{2k}].
\end{gather*}

Let $\I_\mFr\cup \J_\mFr$ be a partition introduced above, then $[\I_\mFr]=[\J_\mFr]$.
Characteristics $[\I_\mFr]$ of even multiplicity $\mFr$ are even, and of odd $\mFr$ are odd.
According to the Riemann vanishing theorem, $\theta(v+\mathcal{A} (\I_\mFr)+K; \tau)$ 
vanishes to order $\mFr$ at $v\,{=}\,0$.
Number $\mFr$ is called \emph{multiplicity}.
Characteristics of multiplicity $0$ are called \emph{non-singular even characteristics}.
Characteristics of multiplicity $1$ are called \emph{non-singular odd}.
All other characteristics are called \emph{singular}.

\subsection{Hamiltonian systems on coadjoint orbits}\label{ss:HSCO}
Let $\mathfrak{g}$ be a semi-simple Lie algebra, and 
$\mathfrak{g}^\ast$ be the dual algebra to $\mathfrak{g}$
with respect to a bilinear form $\langle \cdot , \cdot \rangle: \mathfrak{g}^\ast \times \mathfrak{g} \mapsto \Complex$.
That is, $\mathfrak{g}^\ast$  is the space of $1$-forms $\Psi$ on $\mathfrak{g}$.

An orbit of coadjoint action of the corresponding Lie group\footnote{Here $\exp$ 
denotes the exponential map from a Lie algebra to 
a Lie group.} $G = \exp(\mathfrak{g})$  is defined by
\begin{gather*}
\mathcal{O} = \{\Psi \mid \Psi = \Ad^\ast_g \Phi,\ g \in G\},
\end{gather*}
where  $\Phi$  denote an initial point in $\mathfrak{g}^\ast$.
If $S$ is the stationary subgroup of $\Phi$ in $G$, then
$$ \mathcal{O}  = G / S.$$

The tangent space at $\Psi \in \mathcal{O}$ is 
$$T_{\Psi} = \{\ad_A^\ast \Psi \mid A \in \mathfrak{g}\},$$
where $\ad_A^\ast \Psi  = - \ad_A \Psi  = [\Psi, A]$.
If $A \in \mathfrak{s}_{\Psi}$, where $\exp(\mathfrak{s}_{\Psi}) = S_{\Psi}$
is the stationary subgroup of $\Psi$, then $\ad_A^\ast \Psi = 0$. Thus,
$$T_{\Psi} = \mathfrak{g} /  \mathfrak{s}_{\Psi}.$$
When $\Psi$ runs the orbit $\mathcal{O}$, and $A\in \mathfrak{g} /  \mathfrak{s}_{\Psi}$ is fixed, 
tangent vectors $\ad_A^\ast \Psi$
draw a vector field $\widetilde{A}$ on the orbit.
Denote by $\rho$ the map $A \mapsto \widetilde{A}$. Let 
$$T(\mathcal{O}) = \{\widetilde{A} \overset{\rho}{=} 
\ad_A^\ast \Psi \mid A \in \mathfrak{g}/  \mathfrak{s}_{\Psi}, \Psi\in \mathcal{O}\} $$ 
be the tangent space of the orbit $\mathcal{O}$.

At a point $\Psi \in \mathfrak{g}^\ast$ a skew-symmetric $2$-form
$\varpi_{\Psi}: T(\mathcal{O}) \times T(\mathcal{O}) \mapsto \Complex$ is defined
by the rule
\begin{equation}\label{2form}
\varpi_{\Psi} (\widetilde{A},\widetilde{B}) = \langle \Psi, [A,B] \rangle.
\end{equation}
The form $\varpi_{\Psi}$  is non-degenerate and closed, 
that follows immediately from the definition \eqref{2form}.
The form $\varpi$ is $G$-invariant: that is, it does not change when $\Psi$
runs the orbit, since vector fields $\widetilde{A}$, $\widetilde{B}$ transform in accordance with
 $\Psi$, namely $\widetilde{A}  \overset{\rho}{=} \ad_A^\ast \Psi$, $\widetilde{B}  \overset{\rho}{=} \ad_B^\ast \Psi$.
 
 An orbit $\mathcal{O}$ equipped with a non-degenerate closed $2$-form \eqref{2form}
serves as a homogeneous symplectic manifold. The form $\varpi$ realizes the
isomorphism between the space of $1$-forms $\mathfrak{g}^\ast$ and the space of vector fields $T(\mathcal{O})$.
Below we explain this in more detail. 

Let $X_i$ form a basis of $\mathfrak{g}$,
and $\Xi_i$ form the dual basis of $\mathfrak{g}^\ast$ such that $\langle \Xi_i, X_j \rangle = \delta_{i,j}$,
where $\delta_{i,j}$ is the Kronecker delta. Then $\langle \Psi, X_i \rangle = \psi_i$ serve as coordinates 
of $\Psi \in \mathfrak{g}^\ast$ in the chosen basis. Let $\mathcal{C}(\mathfrak{g}^\ast)$
be the space of smooth functions on $\mathfrak{g}^\ast$. Let
$B(\mathfrak{g}^\ast, \mathfrak{g})$ be the space of closed $1$-forms $\nabla \mathcal{F} $ assigned to
 $\mathcal{F} \in \mathcal{C}(\mathfrak{g}^\ast)$ by the rule
$$\nabla \mathcal{F}(\Psi) = \sum_i \frac{\partial \mathcal{F}(\Psi)}{\partial \psi_i} X_i. $$
On the other hand, every $X \in \mathfrak{g} /  \mathfrak{s}_{\Psi}$ gives rise to a vector field
$\widetilde{X} \overset{\rho}{=} \ad^\ast_X \Psi$, tangent to $\mathcal{O} \subset \mathfrak{g}^\ast$.
Therefore, we have
\begin{gather*}
\mathcal{C}(\mathfrak{g}^\ast) \to  B(\mathfrak{g}^\ast, \mathfrak{g}) \overset{\rho}{\to} T(\mathcal{O}),\\
\mathcal{F}(\Psi) \mapsto \nabla \mathcal{F}(\Psi) \mapsto \ad^\ast_{\nabla \mathcal{F}(\Psi)} \Psi.
\end{gather*}
This defines the isomorphism between $1$-forms and vector fields.

The map $\rho$ is defined on all $1$-forms  $\sum_i  \mathcal{F}_i X_i \in \mathfrak{g}$.
The inverse map brings $\ad^\ast_{X(\Psi)} \Psi$ to $1$-form $\Xi_{\Psi}$ which acts on another vector field
$\ad^\ast_{A} \Psi$ as follows
$$ \Xi_{\Psi} (\ad^\ast_{A} \Psi) = \langle \Psi, [X(\Psi),A] \rangle.$$

In what follows, we call a function $\mathcal{F} \in \mathcal{C}(\mathfrak{g}^\ast)$
a hamiltonian, and  $\ad^\ast_{\nabla \mathcal{F}(\Psi)} \Psi$ 
the corresponding hamiltonian vector field. In particular, $\psi_i = \langle \Psi, X_i \rangle$
serve as  hamiltonians of the vector fields $\ad^\ast_{X_i} \Psi$.
A commutator of two hamiltonian vector fields is a hamiltonian vector field,
which is defined through the Lie---Poisson bracket:
\begin{gather}\label{LiePoiBra}
\begin{split}
\{\mathcal{F},\mathcal{H}\} &= \langle \Psi, \bigg[\sum_i  \frac{\partial \mathcal{F}(\Psi)}{\partial \psi_i} X_i,
\sum_j  \frac{\partial \mathcal{H}(\Psi)}{\partial \psi_j} X_j \bigg] \rangle \\
& =  \sum_{i,j} \frac{\partial \mathcal{F}(\Psi)}{\partial \psi_i} \frac{\partial \mathcal{H}(\Psi)}{\partial \psi_j}
\langle \Psi, [X_i,X_j] \rangle .
\end{split}
\end{gather}

Every hamiltonian $\mathcal{H}$ gives rise to a flow
\begin{subequations}
\begin{gather*}
\frac{\rmd \Psi}{\rmd \tau} = \ad^\ast_{\nabla \mathcal{H}} \Psi,
\end{gather*}
or in coordinates
\begin{gather*}
\frac{\rmd \psi_i}{\rmd \tau} = \langle \ad^\ast_{\nabla \mathcal{H}} \Psi, X_i \rangle
= \{ \mathcal{H}, \psi_i\},
\end{gather*}
\end{subequations}
which is a system of hamiltonian equations, and $\tau$ serves as a parameter along the 
hamiltonian vector field. The fact that
$X_i \in \mathfrak{g}/\mathfrak{s}_\Psi$ is not essential, since $\ad^\ast_{X_i} \Psi =0$
if $X_i \in \mathfrak{s}_\Psi$.

\section{Integrable systems on coadjoint orbits of a loop group}\label{s:KdVHier}
The KdV equation arises within the hierarchy of integrable Hamiltonian systems 
on coadjoint orbits of loop $\mathrm{SL}(2,\Real)$-group. 
Here we briefly recall this scheme, as presented in \cite{Holod82}
and recalled in \cite{BerHol07}.
Such a construction is based on the results of \cite{Adler78,AdlMoer80}.

Let $\widetilde{\mathfrak{g}} = \mathfrak{g} \otimes
\mathcal{P}(z,z^{-1})$, where $\mathcal{P}(z,z^{-1})$ denotes the algebra of Laurent series in~$z$,
and $\mathfrak{g}=\mathfrak{sl}(2,\Complex)$ has the standard basis
\begin{gather*}
\H = \frac{1}{2} \begin{pmatrix} 1 & 0 \\ 0 & -1  \end{pmatrix},\quad
\X = \begin{pmatrix} 0 & 1 \\ 0 & 0  \end{pmatrix},\quad
\Y = \begin{pmatrix} 0 & 0 \\ 1 & 0  \end{pmatrix}.
\end{gather*}
In the algebra $\widetilde{\mathfrak{g}}$ the \emph{principal grading} is introduced, defined by the grading operator 
$$ \mathfrak{d} = 2 z \frac{\rmd }{\rmd z}  + \ad_\H,$$
where $\ad$ denotes the adjoint operator in $\mathfrak{g}$,
that is $\forall\, \X\in \mathfrak{g}$ $\ad_\H \X = [\H,\X] $. 

Let $\{\X_{2m-1},\, \Y_{2m-1},\, \H_{2m},\mid m \in \Integer\}$, such that
\begin{gather*}
\X_{2m-1} = z^{m-1} \X,\quad
\Y_{2m-1} = z^{m} \Y,\quad \H_{2m} = z^m \H,
\end{gather*}
form a basis of $\widetilde{\mathfrak{g}}$. In general, an element of the basis will be denoted by $\Z_{a,\ell}$,
where $a=1$, $2$, $3$, and $\ell$ indicates the degree of $\Z_{a,\ell}$, namely $\mathfrak{d} \Z_{a,\ell} = \ell \Z_{a,\ell}$.
Actually, $\Z_{1,\ell}=\H_\ell$, $\Z_{2,\ell}=\Y_\ell$, $\Z_{3,\ell}=\X_\ell$.
Let $\mathfrak{g}_{\ell}$ denote the degree $\ell$ eigenspace of $\mathfrak{d}$.
Thus, $\mathfrak{g}_{2m-1} = \spanOp \{\X_{2m-1},\, \Y_{2m-1}\}$, and $\mathfrak{g}_{2m} = \spanOp \{\H_{2m}\}$.

According to the scheme from \cite{AdlMoer80}, $\widetilde{\mathfrak{g}}$ is divided into two
subalgebras 
\begin{gather*}
\widetilde{\mathfrak{g}}_+ = \oplus \sum_{\ell \geqslant 0} \mathfrak{g}_{\ell},\qquad
\widetilde{\mathfrak{g}}_- = \oplus \sum_{\ell < 0} \mathfrak{g}_{\ell}.
\end{gather*}

The bilinear form
\begin{gather*}
\forall A(z), B(z) \in \widetilde{\mathfrak{g}} \qquad 
\langle A(z), B(z) \rangle = \res_{z=0} \tr A(z) B(z)
\end{gather*}
introduces the duality
\begin{gather*}
\X_{2m-1}^\ast \leftrightarrow \X_{-2m-1},\quad
\Y_{2m-1}^\ast \leftrightarrow  \Y_{-2m-1},\quad
\H_{2m}^\ast \leftrightarrow \H_{-2m-2},
\end{gather*}
where $\{\X_{2m-1}^\ast,\, \Y_{2m-1}^\ast,\, \H_{2m}^\ast,\mid m \in \Integer\}$,  such that
\begin{gather*}
\X_{2m-1}^\ast = z^{m} \X^\ast,\quad
\Y_{2m-1}^\ast = z^{m-1} \Y^\ast,\quad
\H_{2m}^\ast = z^m \H^\ast,
\end{gather*}
form the basis of $\widetilde{\mathfrak{g}}^\ast$,
and the basis elements of $\mathfrak{g}^\ast$ are
\begin{gather*}
\H^\ast = \begin{pmatrix} 1 & 0 \\ 0 & -1  \end{pmatrix},\quad
\X^\ast = \begin{pmatrix} 0 & 0 \\ 1 & 0  \end{pmatrix},\quad
\Y^\ast = \begin{pmatrix} 0 & 1 \\ 0 & 0  \end{pmatrix}.
\end{gather*}

Then the dual subalgebras $\widetilde{\mathfrak{g}}_+^\ast$ and $\widetilde{\mathfrak{g}}_-^\ast$ are
\begin{gather*}
\widetilde{\mathfrak{g}}_+^\ast =  \oplus \sum_{\ell \leqslant -2} \mathfrak{g}_{\ell},\qquad
\widetilde{\mathfrak{g}}_-^\ast = \widetilde{\mathfrak{g}}_+ \oplus \mathfrak{g}_{-1}.
\end{gather*}
Note, that $\mathfrak{g}_{-1}^\ast = \mathfrak{g}_{-1}$, 
and $\mathfrak{g}_{\ell}^\ast$ is dual to $\mathfrak{g}_{-\ell-2}$,
if $\ell > -1$.

\subsection{The phase space of KdV hierarchy}
The KdV equation arises within the hierarchy of hamiltonian systems
on coadjoint orbits of the group $\widetilde{G}_- = \exp(\widetilde{\mathfrak{g}}_-)$.

Let 
$\M_N = \widetilde{\mathfrak{g}}^\ast_- / \big( \sum_{\ell \geqslant 2N+2}  \mathfrak{g}_\ell \big)$.
Actually,
\begin{gather*}
\M_N = \bigg\{\Psi = \sum_{\ell=-1}^{2N+1}  \sum_{a=1,2,3}  \psi_{a,\ell} \Z_{a,\ell}^\ast \bigg\},
\end{gather*}
where $\psi_{a,\ell}$ serve as coordinates on $\M_N$, also called the \emph{dynamic variables},
$$\psi_{a,\ell} = \langle \Psi, \Z_{a,-\ell-2}\rangle.$$
Let $\psi_{1,\ell}=\alpha_{\ell}$, 
$\psi_{2,\ell} = \beta_{\ell}$, $\psi_{3,\ell} = \gamma_{\ell}$, then
every element $\Psi \in \M_N$ is a matrix polynomial in $z$ of the form
\begin{subequations}\label{MKdVPhSp}
\begin{gather}
\Psi(z) =
\begin{pmatrix} \alpha(z) & \beta(z) \\ \gamma(z) & -\alpha(z)
\end{pmatrix},\\
\alpha(z) = \sum_{m=0}^{N} \alpha_{2m} z^m,\quad
\beta(z) = \sum_{m=0}^{N+1} \beta_{2m-1} z^{m-1},\quad
\gamma(z) =  \sum_{m=0}^{N+1} \gamma_{2m-1} z^{m}.
\end{gather}
\end{subequations}

The action of $\widetilde{G}_-$ splits $\M_N$ into orbits
$$\mathcal{O} = \{\Psi = \Ad^\ast_{g} \Phi \mid g\in \widetilde{G}_-\},\qquad \Phi \in \M_N.$$
Initial points $\Phi$ can be taken from the Weyl chamber of $\widetilde{G}_-$ in $\M_N$.
The Weyl chamber is spanned by $\H_{2m}^\ast$, $m=0$, \ldots, $N$. Thus,
initial points $\Phi$ are given by diagonal matrices.

According to the construction presented in subsection~\ref{ss:HSCO},
coadjoint orbits possess a symplectic structure, which remains the same within $\M_N$.
Let $\mathcal{F}$, $\mathcal{H}\in \mathcal{C}(\M_N)$, then \eqref{LiePoiBra} acquires the form
\begin{subequations}\label{LiePoiBraKdV}
\begin{gather}
\{\mathcal{F},\mathcal{H}\} = \sum_{i,j =-1}^{2N+1}  \sum_{a, b=1,2,3} 
W_{i,j}^{a,b} \frac{\partial \mathcal{F}}{\partial \psi_{a,i}} \frac{\partial \mathcal{H}}{\partial \psi_{b,j}},
\\ W_{i,j}^{a,b}  = \langle \Psi, [\Z_{a,-i-2},\Z_{a,-j-2}] \rangle.
\end{gather}
\end{subequations}
We call $\M_N$ the symplectic manifold. The orbits $\mathcal{O}$ which constitute $\M_N$
serve as phase spaces. So hamiltonian systems arise.

In terms of the dynamic variables
the symplectic structure \eqref{LiePoiBraKdV} is defined by
\begin{gather}\label{MKdVPoiBra}
\begin{split}
&\{\beta_{2m-1}, \alpha_{2n}\} = \beta_{2(n+m)+1},\\
&\{\gamma_{2m-1}, \alpha_{2n}\} = - \gamma_{2(n+m)+1},\\
&\{\beta_{2m-1}, \gamma_{2n-1}\} = - 2 \alpha_{2(m+n)},\quad 0 \leqslant m+n \leqslant N.
\end{split}
\end{gather}
If $m+n>N$, such Poisson brackets vanish. In particular, for all $\psi_{a,\ell}$
\begin{gather*}
\{\beta_{2N+1}, \psi_{a,\ell}\} = 0,\qquad
\{\gamma_{2N+1}, \psi_{a,\ell}\} = 0.
\end{gather*}
That is,  $\beta_{2N+1}$, and $\gamma_{2N+1}$
are constant, we assign $\beta_{2N+1} = \gamma_{2N+1} = \rmb$.

Physically meaningful hamiltonian systems arise when $\mathfrak{g}$ is 
one of the real forms of $\mathfrak{sl}(2, \Complex)$, namely $\mathfrak{sl}(2, \Real)$
or $\mathfrak{su}(2)$.

\begin{rem}
In $\M_N$ with $\Psi$ of the form \eqref{MKdVPhSp}, coadjoint orbits of $\widetilde{G}_{-}$
serve as finite phase spaces for the hierarchy of the $+$mKdV equation in the case of $\mathfrak{g} = \mathfrak{su}(2)$, 
and $-$mKdV  in the case of $\mathfrak{g} = \mathfrak{sl}(2, \Real)$. At the same time,
coadjoint orbits of $\widetilde{G}_{+} = \exp(\widetilde{\mathfrak{g}}_+)$ serve as 
phase spaces for the hierarchies of the $\sin$-Gordon or  $\sinh$-Gordon equation,
respectively.
\end{rem}

The KdV hierarchy  is obtained in the case of $\mathfrak{g} = \mathfrak{sl}(2, \Real)$ by 
means of the hamiltonian reduction
\begin{gather}\label{KdVRed}
\mathfrak{g}_{-1} = \spanOp\{\X_{-1} + \Y_{-1}\},\qquad  \beta_{-1}=0.
\end{gather}
Let $\M_N^\circ$ denote $\M_N$ with this reduction applied.
On $\M_N^\circ$ \eqref{MKdVPoiBra} changes into
\begin{gather}\label{KdVPoiBra}
\begin{split}
&\{\gamma_{-1}, \alpha_{2n}\} = \beta_{2n+1} - \gamma_{2n+1}, \quad 0 \leqslant  n < N; \\
&\{\gamma_{2m-1}, \alpha_{2n}\} = - \gamma_{2(m+n)+1}, \\
&\{\beta_{2m-1}, \alpha_{2n}\} = \beta_{2(m+n)+1}, \\
&\{\beta_{2m-1}, \gamma_{2n-1}\} = - 2 \alpha_{2(m+n)},\quad 1 \leqslant m+n \leqslant N.
\end{split}
\end{gather}
Thus, $\{\alpha_{2N}, \psi_{a,\ell}\} = 0$ for all $\psi_{a,\ell}$, and so $\alpha_{2N}$ is constant, 
we assign $\alpha_{2N}= \rma$.

In what follows, we consider the KdV hierarchy only.
Let the dynamic variables on $\M_N^\circ$, of number $3N+1$, be ordered as follows: 
\begin{equation}\label{DynVar}
\gamma_{-1}, \{\alpha_{2m-2},\, \beta_{2m-1}, \gamma_{2m-1}\}_{m=1}^{N},
\end{equation}
and $\alpha_{2N}= \rma$, $\beta_{2N+1} = \gamma_{2N+1} = \rmb$ are constant. 

The Poisson structure $\textsf{W}= (W_{i,j}^{a,b})$ has the form
\begin{align*}
&\textsf{W} = \begin{pmatrix} 0 & \overset{\circ}{\w}_1 & \overset{\circ}{\w}_2 &
\dots & \overset{\circ}{\w}_{N-1} & \overset{\circ}{\w}_N \\
- \overset{\circ}{\w}{}_1^t& \w_2 &  \w_3 & \dots & \w_{N} & \w_{N+1} \\
- \overset{\circ}{\w}{}_2^t& \w_3 &  \w_4 & \dots & \w_{N+1} & 0\\
 \vdots & \vdots & \vdots & \iddots & 0 & 0 \\
- \overset{\circ}{\w}_{N-1}^t & \w_{N} & \w_{N+1} &  \iddots & \vdots & \vdots \\
- \overset{\circ}{\w}_{N}^t  & \w_{N+1}  & 0 &  \dots & 0 & 0
   \end{pmatrix},\\
&\w_n =  \begin{pmatrix} 0 & - \beta_{2n-1} & \gamma_{2n-1}  \\
\beta_{2n-1} & 0 & -2 \alpha_{2n} \\
 - \gamma_{2n-1} & 2 \alpha_{2n}  & 0 \\
 \end{pmatrix},\quad n = 1,\,\dots,\, N, \\
&\w_{N+1} =  \begin{pmatrix} 0 & - \rmb & \rmb  \\
\rmb & 0 & 0 \\
 - \rmb & 0 & 0 \\
 \end{pmatrix}, \\
&\overset{\circ}{\w}_n = \big(\beta_{2n-1} - \gamma_{2n-1},\, 2\alpha_{2n},\, -2\alpha_{2n}\big).
\end{align*}

\subsection{Integrals of motion}
Invariant functions in the dynamic variables arise from 
\begin{equation}\label{InvF}
H(z) = \tfrac{1}{2} \tr \Psi^2(z) = \alpha(z)^2 + \beta(z) \gamma(z).
\end{equation}
The polynomial $H$ is of the form 
$$H(z) = h_{2N+1} z^{2N+1} + \cdots + h_1 z + h_0,$$ 
where
\begin{gather}\label{hExprs}
\begin{split}
& h_{2N+1} = \rmb^2,\\
& h_{2N} = \rma^2 + \rmb (\beta_{2N-1} + \gamma_{2N-1}),\\
& h_{2N-1} = 2 \rma \alpha_{2N-2} + \rmb (\beta_{2N-3} + \gamma_{2N-3}) + \beta_{2N-1} \gamma_{2N-1}, \\
&\dots \\
& h_{0} = \alpha_0^2 + \beta_1 \gamma_{-1}.
\end{split}
\end{gather}

Evidently, any evolution of $\Psi$ preserves $H(z)$.
Therefore, every $h_n$ serves as an integral of motion, and $h_{2N+1} = \rmb^2$ is an absolute constant. 
With respect to the symplectic structure \eqref{KdVPoiBra},
$h_0$, \ldots $h_{N-1}$ give rise to non-trivial hamiltonian flows, we call them hamiltonians. 
\begin{rem}
Within the $-$MKdV hierarchy, there exists one more hamiltonian $h_{-1} = \beta_{-1} \gamma_{-1}$,
which vanishes due to $\beta_{-1} = 0$ in the KdV hierarchy. Thus,  
$\M_N^\circ$  is  the surface of level $h_{-1}=0$ in $\M_N$. That is why this reduction is called hamiltonian.
\end{rem}

On the other hand, $h_N$, \ldots, $h_{2N}$ annihilate 
the Poisson bracket \eqref{KdVPoiBra}, since 
$$\sum_{i =-1}^{2N+1}  \sum_{a =1,2,3} 
W_{i,j}^{a,b} \frac{\partial h_n}{\partial \psi_{a,i}} = 0,  \quad n=N,\dots, 2N,$$ 
and so $\{h_n,\psi \} = 0$
 for any dynamic variable $\psi$. 
Thus, 
\begin{gather}\label{KdVConstr}
h_{N} = c_N,\quad \ldots,\quad  h_{2N} = c_{2N}
\end{gather}
serve as constraints on the symplectic manifold $\M_N^\circ$. These constraints
fix an orbit~$\mathcal{O}$ of dimension $2N$, which  serves 
as a finite phase space of a hamiltonian system.
The Poisson bracket \eqref{KdVPoiBra} is degenerate, and not canonical.
Further, we find canonical coordinates on each orbit of $\M_N^\circ$, 
they provide separation of variables.

\subsection{KdV equation}
On $\M_N^\circ$ we consider two hamiltonians: $h_{N-1}$ gives rise to a stationary flow with parameter $\x$,
and $h_{N-2}$ gives rise to an evolutionary flow with parameter $\rmt$:
\begin{gather}\label{TwoFlowsEq}
\frac{\rmd \psi_{a,\ell}}{\rmd \x} = \{h_{N-1}, \psi_{a,\ell}\},\qquad
\frac{\rmd \psi_{a,\ell}}{\rmd \rmt} = \{h_{N-2}, \psi_{a,\ell}\}.
\end{gather}
In more detail, the stationary flow is
\begin{align*}
&\frac{\rmd \gamma_{-1}}{\rmd \x} = - 2 \alpha_{0} (\beta_{2N-1} - \gamma_{2N-1})
- 2 \rma \gamma_{-1},\\
&\frac{\rmd \alpha_{2m}}{\rmd \x} =  -\rmb (\beta_{2m-1} - \gamma_{2m-1}) 
+ \beta_{2m+1}(\beta_{2N-1} - \gamma_{2N-1}),\\
&\frac{\rmd \beta_{2m+1}}{\rmd \x} = - 2 \rmb \alpha_{2m} + 2 \rma \beta_{2m+1},\\
&\frac{\rmd \gamma_{2m+1}}{\rmd \x} = 2 \rmb \alpha_{2m} - 2 \alpha_{2m+2} (\beta_{2N-1} - \gamma_{2N-1})
- 2 \rma \gamma_{2m+1} , \quad m=0,\dots, N-1.
\end{align*}
From the evolutionary flow we are interested in the equation
\begin{align*}
&\frac{\rmd \beta_{2N-1}}{\rmd \rmt} = - 2 \rmb \alpha_{2N-4} + 2 \rma \beta_{2N-3}.
\end{align*}
Note that,
\begin{gather}\label{KdVEqZeroCurv}
\frac{\rmd \beta_{2N-1}}{\rmd \rmt} = \frac{\rmd \beta_{2N-3}}{\rmd \x},
\end{gather}
and this equality produces the KdV equation for the dynamic variable $\beta_{2N-1}$.
Indeed, eliminating $\gamma_{2N-3}$ from
$h_{2N-1} = c_{2N-1}$ and
$$ \frac{\rmd \alpha_{2N-2}}{\rmd \x} =  -\rmb (\beta_{2N-3} - \gamma_{2N-3}) 
+ \beta_{2N-1}(\beta_{2N-1} - \gamma_{2N-1}), $$
we find
\begin{equation}\label{Aux1} 
2 \rmb \beta_{2N-3}  = - \frac{\rmd  \alpha_{2N-2}}{\rmd \x}
+ \beta_{2N-1}(\beta_{2N-1} - 2 \gamma_{2N-1})
-2 \rma \alpha_{2N-2} + c_{2N-1}. 
\end{equation}
Then, $\alpha_{2N-2}$ is obtained from
$$ \frac{\rmd \beta_{2N-1}}{\rmd \x} = - 2 \rmb \alpha_{2N-2} + 2 \rma \beta_{2N-1},  $$
and $\gamma_{2N-1}$ from $h_{2N} = c_{2N}$. As a result, $\alpha_{2N-2}$ and
$\gamma_{2N-1}$ are expressed in terms of $ \beta_{2N-1}$ and its derivatives:
\begin{align*}
&\alpha_{2N-2} = - \frac{1}{2\rmb} \frac{\rmd \beta_{2N-1}}{\rmd \x} + \frac{\rma}{\rmb} \beta_{2N-1}, \\
&\gamma_{2N-1} = \frac{c_{2N} - \rma^2}{\rmb} - \beta_{2N-1}.
\end{align*}
Substituting these expressions into \eqref{Aux1}, we find
\begin{equation}\label{Aux2} 
 \beta_{2N-3}  =  \frac{1}{4 \rmb^2 } \Big( \frac{\rmd^2 \beta}{\rmd \x^2} 
+ 6 \rmb \beta^2 - 4 c_{2N} \beta + 2  c_{2N-1} \rmb \Big),
\end{equation}
where $\beta$ stands for $\beta_{2N-1}$.
Finally, differentiating \eqref{Aux2} with respect to $\x$ and substituting into \eqref{KdVEqZeroCurv},
we obtain 
\begin{equation}\label{KdVEqBeta} 
\frac{\rmd \beta}{\rmd \rmt} =  \frac{1}{4 \rmb^2 } \Big( \frac{\rmd^3 \beta}{\rmd \x^3} 
+ 12 \rmb \beta \frac{\rmd \beta}{\rmd \x}  - 4 c_{2N} \frac{\rmd \beta}{\rmd \x}  \Big),
\end{equation}
which is the KdV equation in the most general form. According to \cite{KdV},
$c_{2N}$ is in close connection with the velocity of the uniform motion given to the liquid.
In the conventional KdV equation this term is eliminated. 
By assigning $4\rmb^2 = 1$, $c_{2N} = 0$, we come to \eqref{KdVEq}.

\begin{rem}
Note, that the KdV equation arises on $\M_N^\circ$ with $N \geqslant 2$. On $\M_1$ there exists
only stationary flow, in which we have
\begin{subequations}
\begin{align}
&\frac{\rmd \alpha_0}{\rmd \x} =  \beta_1^2 + \rmb \gamma_{-1} - \beta_1 \gamma_1, \label{M1Eq1} \\
&\frac{\rmd \beta_1}{\rmd \x} =  - 2 \rmb \alpha_0 + 2 \rma \beta_1. \label{M1Eq2} 
\end{align}
\end{subequations}
From \eqref{M1Eq1} and $2 \rma \alpha_0 + \rmb \gamma_{-1} + \beta_1 \gamma_1 = c_1$ we eliminate
$\gamma_{-1}$. Then we find $\alpha_0$ from \eqref{M1Eq2}, and substitute into the former equation. Finally, we 
find $\gamma_1$ from $\rma^2 + \rmb (\beta_1 + \gamma_1) = c_2$, and substitute. As a result, we find
\begin{equation}\label{KdVEqDegen}
 \frac{\rmd^2 \beta_1}{\rmd \x^2} + 6 \rmb \beta_1^2 - 4 c_2 \beta_1 = - 2 \rmb c_1,
\end{equation}
which is the first integral of the stationary KdV equation for $\beta_1 \equiv \beta$.

\end{rem}

\subsection{Higher KdV equations}
On $\M_N^\circ$ with $N > 2$ one has $N-2$ higher KdV equations, which come from
\begin{gather*}
\frac{\rmd \beta_{2N-1}}{\rmd \tau_{m-1}} = \frac{\rmd \beta_{2m-1}}{\rmd \x},\quad  m = 1,\, \dots,\, N-2,
\end{gather*}
where $\tau_{n}$ denotes a parameter of the flow $h_{n}$.

For example, the first higher KdV equation arises when $m=N-2$, and has the form
($c_{2N}=0$, $\beta_{2N-1} \equiv \beta$)
\begin{multline*}
\frac{\rmd \beta}{\rmd \tau_{N-3}} =  \frac{1}{16 \rmb^4 } \bigg( \frac{\rmd^5 \beta}{\rmd \x^5} 
+ 20 \rmb \Big(\beta \frac{\rmd^3 \beta}{\rmd \x^3} + 	2 \frac{\rmd \beta}{\rmd \x} \frac{\rmd^2 \beta}{\rmd \x^2}\Big) 
+ 120 \rmb^2 \beta^2 \frac{\rmd \beta}{\rmd \x} \\  + 8 \rmb^2 c_{2N-1} \frac{\rmd \beta}{\rmd \x}  \bigg),
\end{multline*}
which coincides with \cite[Eq.\,(8''), p.\,244]{Nov1974}.

\subsection{Zero curvature representation}
The system of dynamical equations \eqref{TwoFlowsEq} admits  the matrix form
\begin{gather*}
\frac{\rmd \Psi}{\partial \x} = [\Psi, \nabla h_{N-1}], \qquad
\frac{\rmd \Psi}{\partial \rmt} = [\Psi, \nabla h_{N-2}],
\end{gather*}
where $\nabla h_n$ denotes the matrix gradient of $h_n$, namely,
\begin{gather*}
\nabla h_n = \sum_{i =-1}^{2N+1}  \sum_{a =1,2,3} 
\frac{\partial h_n}{\partial \psi_{a,i}} \Z_{a,-i-2}.
\end{gather*}
The matrix gradient of each flow has a complementary matrix $\mathrm{A}$, such that
$$[\Psi, \nabla h_{n}] =  [\Psi,  \mathrm{A}].$$  Unlike $\nabla h_n$, 
the complementary matrix $A$ is defined in the same way in all $\M^\circ_{\tilde{N}}$, $\tilde{N} \geqslant N$.
Actually,
\begin{gather}\label{PsiMatrEqs} 
\frac{\rmd \Psi}{\partial \x} = [\Psi,  \mathrm{A}_\text{st}], \qquad
\frac{\rmd \Psi}{\partial \rmt} = [\Psi,  \mathrm{A}_\text{ev}], \\
\begin{split}
&\mathrm{A}_\text{st} = - \begin{pmatrix} 
\rma & \rmb \\  
\rmb z + \gamma_{2N-1}  - \beta_{2N-1} & - \rma
 \end{pmatrix},\\
&\mathrm{A}_\text{ev} = - \begin{pmatrix} 
\rma z + \alpha_{2N-2} & \rmb z + \beta_{2N-1} \\  
\rmb z^2 + \gamma_{2N-1} z + \gamma_{2N-3}  - \beta_{2N-3} & - (\rma z + \alpha_{2N-2})
 \end{pmatrix}.
 \end{split} \notag
\end{gather}

The zero curvature representation for the KdV hierarchy has the form
\begin{gather*}
\frac{\rmd \mathrm{A}_\text{st}}{\partial \rmt} - \frac{\rmd \mathrm{A}_\text{ev}}{\partial \x} 
= [\mathrm{A}_\text{st}, \mathrm{A}_\text{ev}].
\end{gather*}

\subsection{Summary}
The affine algebra $ \widetilde{\mathfrak{g}} = \mathfrak{sl}(2,\Real) \otimes \mathcal{P}(z,z^{-1})$
with the \emph{principal grading} is associated with the KdV hierarchy.
Let $\M^\circ_N$ be the  manifold 
$\widetilde{\mathfrak{g}}^\ast_- / \big( \sum_{\ell \geqslant 2N+2}  \mathfrak{g}_\ell \big)$
with the hamiltonian reduction \eqref{KdVRed}, $\dim \M^\circ_N = 3N+1$, $N\,{\in}\, \Natural$.
Evidently,  $\M^\circ_1 \subset \M^\circ_2 \subset \cdots \subset 
\M^\circ_N \subset \M^\circ_{N+1} \subset \cdots $.
Each manifold $\M^\circ_N$ is equipped
with the symplectic structure \eqref{KdVPoiBra}.
Under the action of the loop group $\widetilde{G}_- = \exp(\widetilde{\mathfrak{g}}_-)$
a manifold $\M^\circ_N$ splits into orbits $\mathcal{O}$, each generated by a point 
from the Weyl chamber. On the other hand, such an orbit is defined by the system of $N+1$ constraints
\eqref{KdVConstr}. Each orbit serves as a phase space of dimension $2N$ for a hamiltonian system 
integrable in the Liouville sence.

On orbits within $\M^\circ_N$, $N \geqslant 2$, there exist two hamiltonians
whose flows give rise to the KdV equation \eqref{KdVEqBeta}. We call these flows
stationary and evolutionary with parameters $\x$ and $\rmt$, correspondingly.
On orbits in $\M^\circ_1$ there exists  a stationary flow only.
If $N>2$, higher KdV equations arise. One can use the remaining hamiltonians
to generate  evolutionary flows.

\section{Separation of variables}\label{s:SoV}

\subsection{Spectral curve}
The KdV hierarchy presented above is associated with the family of hyperelliptic curves
\begin{equation}\label{SpectrCurve}
- w^2 + H(z) = 0.
\end{equation}
Indeed, the spectral curve of each hamiltonian system in the hierarchy
is defined by the characteristic polynomial of $\Psi$, namely 
$\det \big(\Psi(z) - w\big) = 0$. Recall, that  $H$ is the polynomial \eqref{InvF} of degree $2N+1$.
All coefficients are integrals of motion: $h_0$, \ldots, $h_{N-1}$ serve as hamiltonians,
and $h_N$, \dots, $h_{2N}$ introduce constraints \eqref{KdVConstr}, which fix an orbit, 
and $h_{2N+1} = \rmb^2$.

\subsection{Canonical coordinates}
As shown in \cite{KK1976}, variables of separation in the hierarchy of the $\sin$-Gordon equation
are given by certain points of a spectral curve in each system of the hierarchy.
In \cite{BerHol07}, this result was extended to 
all integrable systems with spectral curves from the hyperelliptic family.
In general, pairs of coordinates of a certain number of points serve as quasi-canonical variables, 
and so lead to separation of variables.
Below, we briefly explain how to find the required points in the KdV hierarchy, and prove that  
pairs of coordinates of these $N$ points serve as canonical variables on $\M_N^\circ$.

Recall, that the symplectic manifold $\M_N^\circ$ is described by $3N+1$
dynamic variables \eqref{DynVar}. At the same time, 
each orbit $\mathcal{O}$ in $\M_N^\circ$  is fixed by $N+1$ constraints, and so $\dim \mathcal{O}= 2N$.
Thus, $N+1$ dynamic variables can be eliminated with the help of these constraints.
We eliminate variables $\gamma_{2m-1}$, $m=0$, \ldots, $N$. Note, that
all expressions \eqref{hExprs} are linear with respect to $\gamma_{2m-1}$.
The constraints, together with $h_{2N+1} = \rmb^2$, in the matrix form are
\begin{gather*}
\B_{\text{c}} \bm{\gamma} + \A_{\text{c}} = \bm{c},\\
\intertext{where}
\B_{\text{c}} = \begin{pmatrix}
\rmb & 0 &\ddots & 0 & 0 & 0 \\
\beta_{2N-1} & \rmb & \ddots & 0 & 0 & 0 \\
\vdots & \beta_{2N-1} & \ddots & 0 & 0 & 0 \\
\beta_3 & \vdots & \ddots &  \rmb & 0 & 0 \\
\beta_1 & \beta_3 & \dots  & \beta_{2N-1} & \rmb & 0 \\
0 & \beta_1 & \beta_3 & \dots  & \beta_{2N-1} & \rmb 
\end{pmatrix},\qquad
\bm{\gamma} = \begin{pmatrix} \rmb \\
\gamma_{2N-1} \\ \gamma_{2N-3} \\ \vdots \\ 
\gamma_3 \\ \gamma_1 \\ \gamma_{-1}
\end{pmatrix},  \notag \\
\A_{\text{c}} = \begin{pmatrix} 0 \\ \rma^2 \\
2 \rma \alpha_{2N-2} \\ \vdots \\ 
2\rma \alpha_2 + \sum_{n=1}^{N-2} \alpha_{2(N-n)} \alpha_{2(n+1)} \\ 
2\rma \alpha_0 + \sum_{n=1}^{N-1} \alpha_{2(N-n)} \alpha_{2n} \end{pmatrix},\qquad
\bm{c} = \begin{pmatrix}  \rmb^2 \\ c_{2N} \\ c_{2N-1} \\ \vdots \\ c_{N+1} \\ c_N \end{pmatrix}.  \notag
\end{gather*}
The first equation is an identity, we include it to make the matrix $\B$ square and invertible. 
Then
\begin{equation}\label{GammaFromConstr}
\bm{\gamma} = \B_{\text{c}}^{-1}(\bm{c} - \A_{\text{c}}).
\end{equation}

The remaining expressions, which represent hamiltonians $h_0$, \ldots, $h_{N-1}$,
have the matrix form
\begin{gather}\label{Hams}
\B_{\text{h}} \bm{\gamma} + \A_{\text{h}} = \bm{h},\\
\intertext{where}
\B_{\text{h}} = \begin{pmatrix}
0 & 0& \beta_{1} & \beta_3 &\dots & \beta_{2N-3} & \beta_{2N-1} \\
0 & 0 & 0 & \beta_1 & \beta_3 &\dots & \beta_{2N-3} \\
\vdots & \vdots & \vdots & \ddots & \ddots & \ddots & \vdots \\
0 & 0 & 0 & 0 & \dots & \beta_1 & \beta_3 \\
0 & 0 & 0 & 0 & \dots & 0 & \beta_1
\end{pmatrix},  \notag \\
\A_{\text{h}} = \begin{pmatrix} 
 \sum_{n=1}^{N} \alpha_{2(N-n)} \alpha_{2(n-1)} \\
 \sum_{n=2}^{N} \alpha_{2(N-n)} \alpha_{2(n-2)} \\ 
\vdots \\  2\alpha_0 \alpha_2 \\ \alpha_0^2 \end{pmatrix},\qquad
\bm{h} = \begin{pmatrix} h_{N-1} \\ h_{N-2} \\ \vdots \\ h_1 \\ h_0 \end{pmatrix}.  \notag
\end{gather}
Substituting \eqref{GammaFromConstr} into \eqref{Hams}, we obtain
\begin{equation}\label{HamsBeta}
 \bm{h} = \B_{\text{h}}  \B_{\text{c}}^{-1}(\bm{c} - \A_{\text{c}})+ \A_{\text{h}}.
\end{equation}

On the other hand, hamiltonians $h_0$, \ldots, $h_{N-1}$ can be found from 
the equation of the spectral curve, taken at $N$ points 
which form a non-special\footnote{Here a non-special divisor is supposed to be a positive divisor
of degree $g$ on a hyperelliptic curve of genus $g$ contains no pair of points in involution.} 
divisor. Namely, with $i=1$, \ldots, $N$
\begin{equation*}
 - w_i^2 + \rmb^2 z_i^{2N+1} + c_{2N} z_i^{2N} + \cdots + c_N z_i^N 
 + h_{N-1} z_i^{N-1} + \dots + h_1 z_i + h_0 = 0,
\end{equation*}
or in the matrix form
\begin{equation*}
- \bm{w} +  \mathrm{Z}_{\text{c}} \bm{c} +  \mathrm{Z}_{\text{h}} \bm{h} = 0,
\end{equation*}
where
\begin{gather*}
\mathrm{Z}_{\text{c}} = \begin{pmatrix} 
z_1^{2N+1} & z_1^{2N} & \dots & z_1^{N} \\
z_2^{2N+1} & z_2^{2N} & \dots & z_2^{N} \\
\vdots &  \vdots & \ddots & \vdots \\
z_N^{2N+1} & z_N^{2N} & \dots & z_N^{N} 
\end{pmatrix},\quad
\mathrm{Z}_{\text{h}} = \begin{pmatrix} 
z_1^{N-1} & \dots & z_1 & 1 \\ 
z_2^{N-1} & \dots & z_2 & 1 \\ 
\vdots &  \ddots & \vdots & \vdots \\
z_N^{N-1} & \dots & z_N & 1 
\end{pmatrix},\quad 
\bm{w} =  \begin{pmatrix}
w_1^2 \\ w_2^2 \\ \vdots \\ w_N^2
\end{pmatrix}.
\end{gather*}
The matrix $\mathrm{Z}_{\text{h}}$ is square and invertible. Thus,
\begin{equation}\label{HamsSpectrC}
 \bm{h} = \mathrm{Z}_{\text{h}}^{-1} \big(\bm{w} -  \mathrm{Z}_{\text{c}} \bm{c}\big).
\end{equation}

Equations \eqref{HamsBeta} and \eqref{HamsSpectrC}
give the same hamiltonians. Therefore, 
\begin{gather*}
 \B_{\text{h}}  \B_{\text{c}}^{-1}(\bm{c} - \A_{\text{c}})+ \A_{\text{h}} = 
 \mathrm{Z}_{\text{h}}^{-1} \big(\bm{w} -  \mathrm{Z}_{\text{c}} \bm{c}\big).
\end{gather*}
Moreover, constants $\bm{c}$ can be taken arbitrarily,
and so we equate the corresponding coefficients, and the remaining terms:
\begin{subequations}
\begin{gather}
 \B_{\text{h}}  \B_{\text{c}}^{-1}  = - \mathrm{Z}_{\text{h}}^{-1}  \mathrm{Z}_{\text{c}}, \label{cCoefs} \\
  -\B_{\text{h}}  \B_{\text{c}}^{-1} \A_{\text{c}} + \A_{\text{h}} = \mathrm{Z}_{\text{h}}^{-1} \bm{w}. \label{fCoefs}
\end{gather}
\end{subequations}
From \eqref{cCoefs} we find
$$\mathrm{Z}_{\text{h}} \B_{\text{h}} + \mathrm{Z}_{\text{c}} \B_{\text{c}} = 0,$$
which is equivalent to $\beta(z_i) = 0$, since $\mathrm{Z}_{\text{h}} $ is the
Vandermonde matrix.
Then from \eqref{fCoefs} we obtain
$$ \mathrm{Z}_{\text{c}} \A_{\text{c}} + \mathrm{Z}_{\text{h}}\A_{\text{h}} = \bm{w}, $$
which is equivalent to
$w_i^2 - \alpha(z_i)^2 = 0$.
Thus, points $(z_i,w_i)$ are defined by
\begin{equation*}\label{PointsDef2}
\beta(z_i) = 0,\qquad w_i^2 - \alpha(z_i)^2 = 0,  \quad i=1,\dots, N.
\end{equation*}
A  similar result  was firstly discovered in 
 \cite{KK1976} regarding the hierarchy of the $\sin$-Gordon equation. 

\begin{theo}
Suppose an orbit $\mathcal{O} \subset \M_N^\circ$ 
has the coordinates $(\beta_{2m+1}, \alpha_{2m})$, $m=0$, \ldots, $N-1$, as above. 
Then the new coordinates $(z_i, w_i)$, $i=1$, \ldots, $N$, defined by the formulas
\begin{equation}\label{PointsDef1}
\beta(z_i) = 0,\qquad w_i = \epsilon \alpha(z_i),  \quad i=1,\dots, N,
\end{equation}
where $\epsilon^2 = 1$,
have the following properties:
\begin{enumerate}
\renewcommand{\labelenumi}{\arabic{enumi})}
\item  a pair $(z_i, w_i)$ is a root of the characteristic polynomial \eqref{SpectrCurve}.
\item a pair $(z_i, w_i)$ is canonically conjugate with respect to the  Lie-Poisson
bracket \eqref{KdVPoiBra}:
\begin{equation}\label{CanonCoord}
\{z_i, z_j\} = 0,\qquad  \{z_i, w_j\} = -\epsilon\, \delta_{i,j},\qquad
\{w_i, w_j\} = 0.
\end{equation}
\item the canonical $1$-form is
\begin{equation}\label{Liouv1Form}
 - \epsilon \sum_{i=1}^N w_i \rmd z_i.
\end{equation}
\end{enumerate}
\end{theo}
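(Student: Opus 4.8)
The plan is to establish the three properties in order, relying on the structure already developed. Property~1 is immediate: by construction \eqref{PointsDef1} we have $w_i = \epsilon\,\alpha(z_i)$ with $\epsilon^2=1$, so $w_i^2 = \alpha(z_i)^2$, and since $\beta(z_i)=0$ the invariant polynomial \eqref{InvF} reduces to $H(z_i) = \alpha(z_i)^2 + \beta(z_i)\gamma(z_i) = \alpha(z_i)^2 = w_i^2$. Hence $-w_i^2 + H(z_i) = 0$, which is exactly the characteristic polynomial \eqref{SpectrCurve} vanishing at $(z_i,w_i)$. This needs no computation beyond substitution.

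The heart of the matter is property~2, the canonical Poisson relations \eqref{CanonCoord}. First I would compute the brackets $\{\beta(z),\beta(\zeta)\}$, $\{\beta(z),\alpha(\zeta)\}$, and $\{\alpha(z),\alpha(\zeta)\}$ as polynomials in the two spectral parameters, assembling them from the elementary brackets \eqref{KdVPoiBra} via bilinearity. The key observation is that $z_i$ is implicitly defined by $\beta(z_i)=0$, so to differentiate $z_i$ along a Hamiltonian flow one uses implicit differentiation: from $\beta(z_i)=0$ one gets $\{\,\cdot\,, \beta(z_i)\} = \beta'(z_i)\,\{\,\cdot\,, z_i\} + (\partial_\beta\beta)(z_i)$-type terms, so that a bracket involving $z_i$ is recovered by evaluating the generating-function bracket at $z=z_i$ and dividing by $\beta'(z_i)$. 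Concretely, I expect $\{\beta(z),\beta(\zeta)\}$ to be proportional to $\beta$ at the two arguments, so that it vanishes when both are evaluated at zeros $z_i,z_j$; this yields $\{z_i,z_j\}=0$. Similarly, $\{w_i,w_j\}=\epsilon^2\{\alpha(z_i),\alpha(z_j)\}$ must be shown to vanish on the locus $\beta=0$, which I expect to follow because $\{\alpha(z),\alpha(\zeta)\}$ will come out proportional to $\beta$ as well. The nontrivial normalization $\{z_i,w_j\}=-\epsilon\,\delta_{i,j}$ is the one to watch: I would compute $\{\beta(z),\alpha(\zeta)\}$, evaluate at $z=z_i$, differentiate the defining relation $\beta(z_j)=0$ to pull out $\beta'(z_j)\{z_i,\cdot\}$, and then use $w_j=\epsilon\,\alpha(z_j)$ together with the explicit form $\beta(z)=\sum_m \beta_{2m-1}z^{m-1}$ to identify the residue structure that forces the Kronecker delta for $i=j$ and the vanishing for $i\neq j$. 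The main obstacle will be bookkeeping the implicit-differentiation factors $\beta'(z_i)$ correctly and checking that they cancel so as to leave precisely $-\epsilon$; this is where a careful use of the Vandermonde/interpolation structure of $\beta$ (its $N$ simple zeros $z_1,\dots,z_N$) does the real work.

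Finally, property~3 follows formally once \eqref{CanonCoord} is in hand: if $(z_i,w_i)$ are canonically conjugate with $\{z_i,w_j\}=-\epsilon\,\delta_{i,j}$, then the symplectic form is $\varpi = -\epsilon^{-1}\sum_i \rmd w_i \wedge \rmd z_i = -\epsilon\sum_i \rmd w_i\wedge \rmd z_i$, whose primitive (Liouville $1$-form) is $-\epsilon\sum_i w_i\,\rmd z_i$ as claimed in \eqref{Liouv1Form}. I would state this as a direct consequence, noting that the sign and the factor $\epsilon$ are inherited verbatim from the pairing computed in part~2, so no further calculation is required beyond recognizing the standard Darboux form.
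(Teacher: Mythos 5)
Your proposal is correct and follows essentially the same route as the paper: property 1 by direct substitution into $H(z_i)=\alpha(z_i)^2+\beta(z_i)\gamma(z_i)$, property 2 by implicit differentiation of $\beta(z_i)=0$ combined with the divided-difference identity $\sum_{m+n=0}^{N-1} z_i^m z_j^n \beta_{2(m+n)+3} = \bigl(\beta(z_i)-\beta(z_j)\bigr)/(z_i-z_j)$ and the $i=j$ limit yielding $-\epsilon$, and property 3 as the standard Darboux consequence. The only (harmless) inaccuracy is that $\{\beta(z),\beta(\zeta)\}$ and $\{\alpha(z),\alpha(\zeta)\}$ vanish identically under \eqref{KdVPoiBra} rather than being merely proportional to $\beta$, so $\{z_i,z_j\}=0$ is immediate and the entire nontrivial content of $\{w_i,w_j\}$ comes from the implicit dependence of $z_i$ on the $\beta_{2m+1}$ --- precisely the bookkeeping you flag as the real work, and which the paper carries out explicitly.
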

\begin{proof}
Since $z_i$ depend only on $\beta_{2m+1}$, $m=0$, \ldots, $N-1$, and the latter commute,
we have $\{z_i, z_j\} = 0$. Next,
\begin{multline*}
 \{z_i, w_j\}  = \sum_{n+m=0}^{N-1} \bigg(\frac{\partial z_i}{\beta_{2m+1}}  \frac{\partial w_j}{\alpha_{2n}}
 -  \frac{\partial z_i}{\alpha_{2n}}  \frac{\partial w_j}{\beta_{2m+1}} \bigg) \{\beta_{2m+1}, \alpha_{2n}\} \\
 = \frac{-\epsilon}{\beta'(z_i)} \sum_{m+n =0}^{N-1}  z_i^m z_j^n \beta_{2(m+n)+3}
 =  \frac{-\epsilon}{\beta'(z_i)}  \frac{\beta(z_i) - \beta(z_j)}{z_i - z_j},
\end{multline*}
since from \eqref{PointsDef1} we have 
$$\frac{\partial z_i}{\beta_{2m+1}}  = - \frac{z_i^m}{\beta'(z_i)},\qquad
 \frac{\partial z_i}{\alpha_{2n}} = 0,\qquad  
 \frac{\partial w_i}{\alpha_{2n}} = \epsilon z_i^n,\qquad
 \frac{\partial w_i}{\beta_{2m+1}} = - \epsilon z_i^m \frac{ \alpha'(z_i) }{\beta'(z_i)}.
$$
As $i \neq j$, it is evident that $\{z_i, w_j\}  = 0$, due to $\beta(z_i)=\beta(z_j)=0$.
As $i=j$, we get
$$  \{z_i, w_i\}  = \lim_{z_j \to z_i}  \frac{-\epsilon}{\beta'(z_i)}  \frac{\beta(z_i) - \beta(z_j)}{z_i - z_j} 
= -\epsilon. $$
Finally, we find
\begin{multline*}
 \{w_i, w_j\}  = \sum_{n+m=0}^{N-1} \bigg(\frac{\partial w_i}{\beta_{2m+1}}  \frac{\partial w_j}{\alpha_{2n}}
 -  \frac{\partial w_i}{\alpha_{2n}}  \frac{\partial w_j}{\beta_{2m+1}} \bigg) \{\beta_{2m+1}, \alpha_{2n}\} \\
 = - \epsilon^2 \sum_{m+n =0}^{N-1}  \bigg(z_i^m z_j^n \frac{ \alpha'(z_i) }{\beta'(z_i)}  
 - z_i^n z_j^m \frac{ \alpha'(z_j) }{\beta'(z_j)} \bigg) \beta_{2(m+n)+3} \\
 =  - \epsilon^2 \frac{\beta(z_i) - \beta(z_j)}{z_i - z_j} 
 \bigg(\frac{ \alpha'(z_i) }{\beta'(z_i)}  - \frac{ \alpha'(z_j) }{\beta'(z_j)}  \bigg).
\end{multline*}
Thus,  $\{w_i, w_j\}  = 0$,  due to $\beta(z_i)=\beta(z_j)=0$.

Then \eqref{Liouv1Form} follows from the fact that pairs $(z_i, w_i)$, $i=1$, \ldots, $N$, 
are canonically conjugate with \eqref{CanonCoord}.
\end{proof}

In what follows we assign $\epsilon = -1$.

\subsection{Summary}
An orbit $\mathcal{O} \subset \M^\circ_N$, which serves as a phase space of dimension $2N$,
is completely parameterized by 
non-canonical variables  $\alpha_{2n}$, $\beta_{2n+1}$, $n=0$, \ldots, $N-1$.
The  variables $\gamma_{2n-1}$, $n=0$, \ldots, $N$, are eliminated 
with the help of the orbit equations \eqref{KdVConstr}. It is shown,
that $N$ points of the spectral curve \eqref{SpectrCurve}
chosen according to \eqref{PointsDef1} are canonical and serve as variables of separation.
In fact, these $N$ points give a solution of the Jacobi inversion problem \eqref{PointsDef1},
where the coefficients $\beta_{2n+1}$, $\alpha_{2n}$ of polynomials 
fix values of $\wp$-functions, and define a unique point within the fundamental domain of the
Jacobian variety of the spectral curve,
as we see below.

\section{Algebro-geometric integration}\label{s:AGI}
\subsection{Uniformization of the spectral curve}
After separation of variables, we came to the Jacobi inversion problem 
for a non-special divisor of $N$ points $\{(z_k, w_k)\}_{k=1}^N$
on a hyperelliptic curve of genus $N$
\begin{multline}\label{CurveGN}
 0 = F(z,w) \equiv - w^2 + \rmb^2 z^{2N+1} + c_{2N} z^{2N} + \cdots + c_N z^N \\
 + h_{N-1} z^{N-1} + \dots + h_1 z + h_0.
\end{multline}
Not normalized differentials  of the first and second kinds acquire the form
\begin{subequations}\label{K12Difs}
\begin{align}
& \rmd u_{2n-1} =  \frac{\rmb z^{N-n} \rmd z}{\partial_w F(z,w)},\quad n=1,\dots, g,\label{K1Difs} \\
& \rmd r_{2n-1} =  \frac{ \rmd z}{\rmb \partial_w F(z,w)} \sum_{j=1}^{2n-1} (2n-j) h_{2N+2-j} z^{N+n-j}, \label{K2Difs}
\end{align}
\end{subequations}
which follow from \eqref{K1DifsGen}, \eqref{K2DifsGen}, after reducing \eqref{CurveGN}
to the form \eqref{V22g1Eq} by applying the
transformation: $z\mapsto x$, $w \mapsto \rmb y$, $F(z,w) \mapsto f(x,y)= F(x, \rmb y)/\rmb^2$.

On the curve \eqref{CurveGN}, a solution
of the Jacobi inversion problem, that is
the Abel pre-image of $\mathcal{A}(D) = u \in \Jac$, with
a non-special positive divisor
 $D = \sum_{k=1}^N (z_k,w_k)$, is given by the system
\begin{subequations}\label{JIPb}
\begin{align}
&z^N - \sum_{k=1}^N z^{N-k} \wp_{1,2k-1}(u) = 0,\\
&\frac{2 w}{\rmb} + \sum_{k=1}^N z^{N-k} \wp_{1,1,2k-1}(u) = 0.
\end{align}
\end{subequations}

According to \eqref{PointsDef1}, the $N$ values $z_i$ are zeros of the polynomial
$\beta(z)$, and the $N$ values $w_i$ satisfy $w_i = - \alpha(z_i)$. Thus,
\begin{subequations}\label{AlphaBetaUni}
\begin{align}
&\beta_{2(N-k)+1} = - \rmb \wp_{1,2k-1}(u), \\
&\alpha_{2(N-k)} = \tfrac{1}{2} \rmb \wp_{1,1,2k-1}(u) - \rma \wp_{1,2k-1}(u), \quad k=1,\dots, N.
\end{align}
\end{subequations}

Therefore, a solution of  \eqref{KdVEqBeta} is 
\begin{equation}\label{KdVSol}
\beta \equiv \beta_{2N-1} = - \rmb \wp_{1,1}(u).
\end{equation}
This solution arose in \cite[Theorem 4.12]{belHKF}.

\begin{rem}
The fact, that \eqref{KdVSol} serves as a solution of the KdV equation
follows immediately from the relation
\begin{equation}\label{KdVIntWP}
- \wp_{1,1,1,1}(u) + 4 \wp_{1,3}(u) + 6 \wp_{1,1}(u)^2 + 4 \lambda_2 \wp_{1,1}(u) + 2 \lambda_4 =0,
\end{equation}
which holds for hyperelliptic $\wp$-functions in any genus. The relation
corresponds to a curve of the form \eqref{V22g1Eq}. Assigning
$\lambda_2 = c_{2N}/\rmb^2$, $\lambda_4 = c_{2N-1}/\rmb^2$, 
we get the relation for the spectral curve \eqref{CurveGN}.
Differentiation with respect to $u_1$ transforms \eqref{KdVIntWP}
into the KdV equaiton \eqref{KdVEqBeta} in terms of $\wp$-functions.

The relation \eqref{KdVIntWP} is well known in the elliptic case ($N=1$). 
In terms of the Weierstrass function $\wp(u;g_2,g_3)$
it acquires the form
$$- 2 \wp''(u) + 12 \wp(u)^2 - g_2 = 0,$$
where $\wp(u;g_2,g_3) \equiv \wp_{1,1}(u;-4\lambda_4,-4\lambda_6)$, 
$ \wp''(u;g_2,g_3) \equiv \wp_{1,1,1,1}(u;-4\lambda_4,-4\lambda_6)$, 
and $\wp_{1,3}(u)$ vanishes since $u$ has only one component in genus $1$.
The latter relation is obtained by differentiating  the equation
$$ (\wp'(u) )^2 = 4 \wp(u)^3 - g_2 \wp(u) - g_3.$$
Note, that the function $\wp_{1,1}$ introduced above corresponds to a curve of the form 
\eqref{V22g1Eq}, which contains one extra term with the coefficient $\lambda_2$, and 
$\lambda_4 = -g_2/4$, $\lambda_6 = -g_3/4 $.
\end{rem}

\subsection{Equation of motion in variables of separation}
From \eqref{PsiMatrEqs} we find
\begin{subequations}
\begin{align*}
&\frac{\rmd}{\rmd \x} \beta(z) = 2 \rma \beta(z) - 2 \rmb \alpha(z),\\
&\frac{\rmd}{\rmd \rmt} \beta(z) = 2 (\rma z + \alpha_{2N-2}) \beta(z) - 2 (\rmb z + \beta_{2N-1}) \alpha(z),
\end{align*}
\end{subequations}
where all dynamic variables are functions of $\x$ and $\rmt$. Therefore, 
zeros of $\beta(z)$ are  functions of $\x$ and $\rmt$ as well, namely
 $\beta(z) = \rmb \prod_{k=1}^N (z-z_k(\x,\rmt))$.
Then
\begin{subequations}
\begin{align*}
&\frac{\rmd}{\rmd \x} \log \beta(z) = - \frac{1}{z-z_k}\frac{\rmd z_k}{\rmd \x}
= 2 \rma - 2 \rmb \frac{\alpha(z)}{\beta(z)},\quad k=1,\dots, N,\\
&\frac{\rmd}{\rmd \rmt} \log \beta(z) = - \frac{1}{z-z_k}\frac{\rmd z_k}{\rmd \rmt}
= 2 (\rma z + \alpha_{2N-2})  - 2 (\rmb z + \beta_{2N-1}) \frac{\alpha(z)}{\beta(z)}.
\end{align*}
\end{subequations}
Taking into account \eqref{PointsDef1}, we find as $z\to z_k$, $k=1$, \ldots, $N$,
\begin{align*}
&\frac{\rmd z_k}{\rmd \x}
= \frac{2 w_k}{\prod_{j\neq k}^N (z_k - z_j)},&
& \frac{\rmd z_k}{\rmd \rmt}
=  - \frac{2 w_k \sum_{j\neq k} z_j}{\prod_{j\neq k} (z_k - z_j) }.&
\end{align*}

Now, let $D$ be a divisor of points $\{(z_k,w_k)\}_{k=1}^N$ defined by \eqref{PointsDef1}.
The Abel image
\begin{equation*}
u = \mathcal{A}(D) = \sum_{k=1}^N \int_{\infty}^{(z_k,w_k)} \rmd u = 
\sum_{k=1}^N \int_{\infty}^{(z_k,w_k)} 
\begin{pmatrix} 1\\ z \\ \vdots \\ z^{N-1} \end{pmatrix} 
\frac{\rmb \rmd z}{-2w}
\end{equation*}
depends on $\x$ and $\rmt$, since the points $(z_k,w_k)$ are functions of $\x$ and $\rmt$.
Then
\begin{subequations}
\begin{align*}
&\frac{\rmd u_{2n-1}}{\rmd \x} = \sum_{k=1}^N \frac{\rmb z_k^{N-n}}{-2w_k} \frac{\rmd z_k}{\rmd \x}
= \sum_{k=1}^N \frac{-\rmb z_k^{N-n}}{\prod_{j\neq k}^N (z_k - z_j)} = - \rmb \delta_{n,1}, \\
& \frac{\rmd u_{2n-1}}{\rmd \rmt} = \sum_{k=1}^N \frac{\rmb z_k^{N-n}}{-2w_k} \frac{\rmd z_k}{\rmd \rmt}
=  \sum_{k=1}^N  \frac{\rmb z_k^{N-n}\sum_{j\neq k} z_j}{\prod_{j\neq k} (z_k - z_j) } = - \rmb \delta_{n,2}.
\end{align*}
\end{subequations}
Thus, $u_{1} = - \rmb \x + C_1$, $u_3 = -\rmb \rmt + C_3$, and $u_{2n-1} = C_{2n-1} = \const$, $n=3$, \ldots, $N$.

Therefore, the finite-gap solution of the KdV equation \eqref{KdVEqBeta}  in 
the $2N$-dimensional phase space ($N>1$) is
\begin{equation}\label{KdVSolRealCond}
\beta(\x,\rmt) = - \rmb \wp_{1,1}(- \rmb \x + C_1,-\rmb \rmt + C_3, C_5, \dots, C_{2N-1}).
\end{equation}

Since $\rmb^2\in \Real$, we have two possibilities: (i) $\rmb$ is real, or (ii) $\rmb$ is purely imaginary.
In the case (i), the first two arguments $u_1$, $u_3$ of $\wp_{1,1}$ run along lines parallel to the real axes.
In the case (ii), the first two arguments $u_1$, $u_3$ of $\wp_{1,1}$ go  parallel to the imaginary axes.
If $\beta$ describes a quasi-periodic wave,
none of the mentioned lines coincides with the real or imaginary axis, due to the singularity of $\wp_{1,1}$ at $u=0$.

Next, we find  the constant vector $\bm{C} = (C_1, C_3, \dots, C_{2N-1})$ 
such that  $\beta$ in \eqref{KdVSolRealCond} is real-valued. We call this 
 the reality conditions.

\subsection{Summary}
The uniformization of the spectral curve is given by \eqref{JIPb} in an implicit form.
On the other hand, it brings explicit expressions \eqref{AlphaBetaUni} 
for dynamic variables $\beta_{2n-1}$, $\alpha_{2n}$.  Coordinates $u_1$ and $u_3$
of the Jacobian variety of the spectral curve serve, up to a constant multiple $-\rmb$,
 as parameters $\x$ and $\rmt$ of the stationary and evolutionary flows, correspondingly.

\section{Reality conditions}\label{s:RealCond}
It is known that an $N$-gap hamiltonian system in terms of variables of separation $(z_i,w_i)$ 
splits into $N$ independent systems.
Each system with coordinate $z_i$ and momenta $w_i$ describes a motion of mass $1/2$ in the potential $-H(z_i)$,
and periodic motion is located between a pair of roots of the potential where $H(z)>0$. Thus, finite trajectories
contain branch points as turning points, and so the argument of $\wp_{1,1}$ in \eqref{KdVSolRealCond} 
attains half-periods. In order to guarantee that $\wp_{1,1}$ is real-valued
we assume that all branch points are real.

Further,
we work with the spectral curve \eqref{SpectrCurve} reduced to the canonical form~\eqref{V22g1Eq}.

\begin{prop}\label{P:ReImPeriods}
Let a hyperelliptic curve of genus $g$ have all real branch points. Then with
a choice of cycles as on  fig.~\ref{cyclesOdd} and the standard not normalized
holomorphic differentials \eqref{K1DifsGen}, all entries of the period matrix $\omega$ are real, and
all entries of the period matrix $\omega'$ are purely imaginary.
\end{prop}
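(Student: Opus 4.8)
The plan is to exploit the anti-holomorphic involution coming from complex conjugation. Since all branch points $e_i$ are real, the coefficients $\lambda_{2i+2}$ in \eqref{V22g1Eq} are real, so the curve is defined over $\Real$ and $\iota\colon (x,y)\mapsto(\bar x,\bar y)$ is a well-defined anti-holomorphic involution of $\mathcal V$. The holomorphic differentials $\rmd u_{2n-1}=x^{g-n}\rmd x/(-2y)$ of \eqref{K1DifsGen} have real coefficients, hence $\iota^\ast \rmd u_{2n-1}=\overline{\rmd u_{2n-1}}$, and for any cycle $\gamma$ a one-line computation on the parametrised integral gives the conjugation rule $\overline{\oint_\gamma \rmd u}=\oint_{\iota_\ast\gamma}\rmd u$. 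Thus the reality type of a period is governed entirely by how $\iota_\ast$ acts on the corresponding homology class: a class with $\iota_\ast\gamma=\gamma$ yields real periods, and a class with $\iota_\ast\gamma=-\gamma$ yields purely imaginary periods. The whole proposition reduces to computing $\iota_\ast$ on the basis $\{\mathfrak a_k,\mathfrak b_k\}$ of fig.~\ref{cyclesOdd}.

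Next I would describe the fixed-point set of $\iota$. Over each band $[e_{2k-1},e_{2k}]$ (and over $[e_{2g+1},\infty]$) the polynomial under the root is positive, so $y$ is real and the two points $(x,\pm\sqrt{P(x)})$ are each fixed by $\iota$; these $g+1$ real ovals are exactly the fixed locus, so $\mathcal V$ is an M-curve and $\iota$ interchanges the two halves $\Sigma_\pm$ lying over the upper and lower half-planes while fixing the ovals pointwise. By construction the cycle $\mathfrak a_k$ of fig.~\ref{cyclesOdd} is precisely the real oval over $[e_{2k-1},e_{2k}]$; being pointwise fixed it satisfies $\iota_\ast\mathfrak a_k=\mathfrak a_k$, whence every entry $\oint_{\mathfrak a_k}\rmd u_{2n-1}$ is real. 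The same conclusion is visible directly: collapsing $\mathfrak a_k$ onto the band and using that the two edges of the cut carry $y=\pm\sqrt{P}$ turns the period into $\int_{e_{2k-1}}^{e_{2k}} x^{g-n}\,\rmd x/\sqrt{P(x)}$, a real integrand.

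For the $\mathfrak b$-cycles I would realise $\mathfrak b_k$ in an explicitly conjugation-antisymmetric form. I would choose an arc $\delta$ lying in the closed upper half $\overline{\Sigma_+}$ that runs from the oval over $[e_{2k-1},e_{2k}]$ to the oval over $[e_{2g+1},\infty]$, meeting the real axis only at its two endpoints; then $\mathfrak b_k:=\delta-\iota(\delta)$ is a closed cycle that crosses the cut $(e_{2k-1},e_{2k})$ once and enters the cut $(e_{2g+1},\infty)$, exactly as drawn in fig.~\ref{cyclesOdd}, while avoiding the intermediate bands (it passes over them inside $\Sigma_+$). Because $\iota$ fixes the endpoints and swaps $\Sigma_+\leftrightarrow\Sigma_-$, one gets $\iota_\ast\mathfrak b_k=\iota(\delta)-\delta=-\mathfrak b_k$, so $\overline{\oint_{\mathfrak b_k}\rmd u}=-\oint_{\mathfrak b_k}\rmd u$ and every entry of $\omega'$ is purely imaginary. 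As a sanity check, deforming $\delta-\iota(\delta)$ onto the real axis expresses the period through integrals over the gaps $(e_{2j},e_{2j+1})$, where $P<0$ and $\sqrt P$ is imaginary.

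The hard part will be precisely this third step: guaranteeing that the representative from the figure is genuinely $\iota$-antisymmetric and is not contaminated by $\iota$-invariant ($\mathfrak a$-type) cycles. A naive ``long'' path dragged along the real axis from the first to the last band crosses the intermediate bands, where $\sqrt P$ is real, and picks up spurious real contributions equal to combinations of $\mathfrak a$-periods; adding such a real multiple would destroy pure imaginarity. The remedy, and the heart of the argument, is to keep the arc $\delta$ inside the single half-surface $\overline{\Sigma_+}$ so that the cycle meets the real axis only on the two relevant ovals; I would spend the care here verifying that $\delta-\iota(\delta)$ has the correct intersection numbers ($\mathfrak a_k\cdot\mathfrak b_k=1$ and $\mathfrak a_j\cdot\mathfrak b_k=0$ for the intermediate $j$), so that it genuinely represents the basis cycle of fig.~\ref{cyclesOdd} rather than that cycle plus stray ovals.
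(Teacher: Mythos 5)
Your proof is correct, but it takes a genuinely different route from the paper's. The paper argues by direct computation: it deforms the cycles onto the real axis, which yields the explicit quadrature formulas \eqref{K12PerComp}, namely $\omega_k = 2\int_{e_{2k-1}}^{e_{2k}}\rmd u$ and $\omega'_k = 2\sum_{i=k}^{g}\int_{e_{2i}}^{e_{2i+1}}\rmd u$, and then simply reads off the sign of $\Lambda(x)$: positive on the bands $(e_{2k-1},e_{2k})$, so the $\mathfrak{a}$-periods are real, and negative on the gaps $(e_{2k},e_{2k+1})$, so the $\mathfrak{b}$-periods are purely imaginary (the paper's proof also treats two variants not demanded by the literal statement: the curve $-w^2+H(z)=0$ with leading coefficient $\rmb^2$, and the even-degree case with $2g+2$ finite branch points). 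Your argument via the anti-holomorphic involution $\iota$ and the symmetry type of homology classes is the standard real-Riemann-surface mechanism: it is more structural, it explains \emph{why} the contributions from the intermediate bands cancel (because $\mathfrak{b}_k$ is $\iota$-anti-invariant), and it would generalize beyond the hyperelliptic case, whereas the paper's computation has the advantage of producing exactly the formulas \eqref{K12PerComp} that are reused later for the numerical work. The two proofs meet precisely at your ``hard part'': the second expression for $\omega'_k$ in \eqref{K12PerComp} is nothing but the assertion that the band contributions cancel, i.e.\ that the $\mathfrak{b}_k$ of fig.~\ref{cyclesOdd} admits the mirror-symmetric representative $\delta-\iota(\delta)$. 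One caution about your final paragraph: intersection numbers with the $\mathfrak{a}_j$ alone determine a class only modulo the Lagrangian lattice spanned by the $\mathfrak{a}_j$ --- and a discrepancy by an $\mathfrak{a}$-combination is exactly what would destroy pure imaginarity --- so the decisive point is not the intersection-number check but the isotopy argument you also sketch: the cycle as drawn goes out on one sheet and returns by the mirror path on the other, hence can be kept in the form $\delta-\iota(\delta)$ with $\delta\subset\overline{\Sigma_+}$. On this point your write-up is actually more explicit than the paper's, which states \eqref{K12PerComp} without further justification.
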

\begin{proof}
Let a hyperelliptic curve \eqref{V22g1Eq}  have the form $-y^2 + \Lambda(x)=0$,
where by $\Lambda(x)$ the polynomial in $x$ is denoted.
If all branch points are real, then $\Lambda(x) > 0$ at $x\in (e_{2k-1},e_{2k})$, $k=1$, \ldots,~$g$.
Thus, the periods $\omega_k$, computed from the holomorphic differentials \eqref{K1DifsGen}
along $\mathfrak{a}_k$ cycles, are real, cf.\,\eqref{K12PerComp}. 
On the other hand, $\Lambda(x) < 0$ at $x\in (e_{2k},e_{2k+1})$, $k=1$, \ldots, $g$.
And so the periods  $\omega'_k$, computed from the holomorphic differentials \eqref{K1DifsGen}
along $\mathfrak{b}_k$ cycles, are purely imaginary, cf.\,\eqref{K12PerComp}.

Let a hyperelliptic curve \eqref{CurveGN} have the form $-w^2 + H(z)=0$, where
the leading term  $z^{2g+1}$ of the polynomial $H(z)$ has an arbitrary real coefficient $\rmb^2$.
By the transformation $z\mapsto x$, $w \mapsto \rmb y$
we find $(w/\rmb)^2 = \Lambda(z)$. Thus, the holomorphic differentials
\eqref{K1Difs} produce the same periods $\omega_k$, $\omega'_k$ as  \eqref{K1DifsGen}.

Finally, suppose, that the leading term of a genus $g$ hyperelliptic curve of the form $-y^2 + \Lambda(x)=0$
is $x^{2g+2}$, and so all $2g+2$ branch points are finite. In this case, we enumerate branch points 
by indices $i=0$, \ldots, $2g+1$, and $e_0$ serves as the base-point. The cut $(e_{2g+1}, \infty)$
is replaced by  $(e_{2g+1},\infty)\cup \{ \infty\} \cup (\infty,e_0)$. If the canonical cycles are defined
as on  fig.\,\ref{cyclesOdd}, then $\omega_k$ are real, and $\omega'_k$ are purely imaginary.
\end{proof}

\subsection{Singularities of $\wp$-functions}
Recall that all half-periods are described in terms of partitions 
$\I_\mFr \cup \J_\mFr$, $\mFr=0$, \ldots, $[(g+1)/2]$,
 of the set $\mathcal{S}$ of indices of branch points.
The multiplicity $\mFr$ shows the order of vanishing of $\theta[K]\big(v+\omega^{-1}\mathcal{A}(\I_\mFr)\big)$ at $v=0$,
and so the order of vanishing of  $\sigma \big(u+\mathcal{A}(\I_\mFr)\big)$.
Therefore, partitions with $\mFr = 0$ correspond to half-periods where the sigma function does not vanish. All other 
half-periods are zeroes of the sigma function, 
and so $\wp$-functions have singularities at $\mathcal{A}(\I_\mFr)$, $\mFr > 0$.

Let $|\I |$ denote the cardinality of a set $\I$. 
We drop $0$ from all sets, and calculate the cardinality omitting $0$.
Thus, $|\I_0 |=g$.

With a choice of cycles as on  fig.\,\ref{cyclesOdd}, we have
the following correspondence between sets $\I_{[g/2]}$ of cardinality $1$ and half-periods:
\begin{gather}\label{HPchar}
\begin{split}
&\{2k-1\} \sim \tfrac{1}{2} \omega'_{k} + \sum_{i=1}^{k-1} \tfrac{1}{2}  \omega_i,\quad
\{2k\} \sim  \tfrac{1}{2} \omega'_k + \sum_{i=1}^k \tfrac{1}{2}  \omega_i,\quad k=1,\dots, g, \\
&\{2g+1\} \sim  \sum_{i=1}^g \tfrac{1}{2}  \omega_i.
\end{split}
\end{gather}

Every half-period has the form $\Omega(I) + \Omega'(I')$,
where $\Omega(I)$ is generated from real half-periods $\tfrac{1}{2} \omega_k$, $k=1$, \ldots, $g$,
and $\Omega'(I')$ is generated from purely imaginary half-periods $\tfrac{1}{2}  \omega'_k$, $k=1$, \ldots, $g$.
Namely, 
\begin{equation}\label{ImPerShelf}
\Omega (I) = \sum_{k\in I} \tfrac{1}{2} \omega_k,\qquad 
\Omega' (I') = \sum_{k\in I'} \tfrac{1}{2} \omega'_k,
\end{equation}
and $I$, $I'$ are certain subsets of $\{1,\,2,\,\dots,\,g\}$. 
There exist $2^g$ such subsets. When two half-periods $\Omega (I_1) + \Omega' (I'_1)$ 
and $\Omega (I_2) + \Omega' (I'_2)$  are added, the resulting
subset $I$ is the union of $I_1$ and $I_2$ where indices occurring twice dropped,
and  $I'$ is obtained similarly from $I'_1$ and $I'_2$.

Let $\bar{\Jac} = \Complex^g$ be the vector space where the Jacobian variety $\Jac$ of a curve is embedded.
We split $\bar{\Jac} $ into the real part $\ReN \bar{\Jac}$, and the imaginary part $\ImN \bar{\Jac}$
such that $\bar{\Jac}  = \ReN \bar{\Jac} \oplus \ImN \bar{\Jac}$.
The real part $\ReN \bar{\Jac} \sim \Real^g$ is a span of real axes of $\bar{\Jac}$ over $\Real$.
And $\ImN \bar{\Jac}\sim \Real^g$ is a span of imaginary axes of $\bar{\Jac}$ over~$\Real$.
Consider $2^g$ subspaces $\mathfrak{U}_{\Im}(I) = \Omega(I) + \ImN \bar{\Jac}$,
 parallel to $\ImN \bar{\Jac}$, and  $2^g$  
subspaces $\mathfrak{U}_{\Re}(I') = \Omega'(I') + \ReN \bar{\Jac}$,
parallel to $\ReN \bar{\Jac}$.
We are interested in such subspaces where $\wp$-functions have no singularities.

\begin{prop}\label{P:URe}
Among $2^g$ subspaces $\mathfrak{U}_{\Re}(I') = \Omega'(I') + \ReN \bar{\Jac}$,
where $\Omega'(I')$ is defined by \eqref{ImPerShelf}, and $I'$ runs over all subsets of $\{1,\,2,\,\dots,\,g\}$,
there exists only one subspace which contains no zeros of the sigma function.
With a choice of cycles as on fig.\,\ref{cyclesOdd}, this subspace corresponds to
$\hat{I}' = \{1,2,\dots, g\}$, that is
 $$\Omega'(\hat{I}' ) = \sum_{k=1}^g \tfrac{1}{2}  \omega'_k.$$
 and so $u \in \mathfrak{U}_{\Re}(\hat{I}' ) $ has the form
\begin{gather*}
 u = s + \Omega'(\hat{I}' ), \quad s\in \Real^g.
\end{gather*}
\end{prop}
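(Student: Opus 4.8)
The plan is to translate the statement about zeros of $\sigma$ into a statement about effective divisors and then to track only the imaginary parts of their Abel images. By the Riemann vanishing theorem together with the definition \eqref{SigmaThetaRel}, $\sigma(u)=0$ if and only if $u=\mathcal{A}(D)$ for some positive divisor $D\in\mathcal{V}^{g-1}$ of degree $g-1$. Since every point of $\mathfrak{U}_{\Re}(I')$ has the fixed imaginary part $\Omega'(I')$ and an arbitrary real part, the proposition is equivalent to the assertion that $\Omega'(\hat I')=\sum_{k=1}^{g}\tfrac12\omega'_k$ is the unique choice of $\Omega'(I')$ that is \emph{not} realised as $\ImN\mathcal{A}(D)$ for any $D\in\mathcal{V}^{g-1}$.

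First I would compute the imaginary part of the Abel image of a single point. By Proposition~\ref{P:ReImPeriods} the columns of $\omega$ are real and those of $\omega'$ are purely imaginary, so $\ImN\mathcal{A}(P)$ is governed entirely by the intervals on which $\Lambda<0$. Using \eqref{HPchar} this yields the dictionary: a real point $P$ on the oval over $[e_{2k-1},e_{2k}]$ has $\ImN\mathcal{A}(P)=\tfrac12\omega'_k$ (independent of the position of $P$ on that oval, since $\rmd u$ is real there); a real point on the unbounded oval over $[e_{2g+1},\infty)$ has $\ImN\mathcal{A}(P)=0$; and a pair of points exchanged by the hyperelliptic involution contributes $0$, because $\mathcal{A}\big((x,y)+(x,-y)\big)=0$. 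Consequently, for a conjugation-invariant divisor $D$ (real points together with complex-conjugate pairs), $\ImN\mathcal{A}(D)\equiv\Omega'(S)\ (\modR \text{ lattice})$, where $S\subseteq\{1,\dots,g\}$ is the set of bounded ovals carrying an odd number of points of $D$.

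The counting step is then immediate and is the conceptual heart of the argument. A divisor of degree $g-1$ has only $g-1$ points, so at most $g-1$ of the $g$ bounded ovals can be odd-occupied; hence $S=\{1,\dots,g\}$ is impossible, $\Omega'(\hat I')$ is never attained, and $\mathfrak{U}_{\Re}(\hat I')$ carries no zero of $\sigma$. Conversely, for every $I'\neq\{1,\dots,g\}$ we have $|I'|\le g-1$, and placing one point on each oval indexed by $I'$ while distributing the remaining $g-1-|I'|$ points as conjugate pairs (with a single leftover point, if any, thrown onto the unbounded oval, which does not alter $S$) produces a $D\in\mathcal{V}^{g-1}$ with $\ImN\mathcal{A}(D)=\Omega'(I')$; thus every other subspace does meet the zero set of $\sigma$. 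Together these give both existence and uniqueness.

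The step requiring genuine care — and which I expect to be the main obstacle — is justifying that a zero of $\sigma$ lying in the real locus $\mathfrak{U}_{\Re}(I')$ is always represented by a \emph{conjugation-invariant} divisor, so that the height computation is exhaustive and no zero can slip into $\mathfrak{U}_{\Re}(\hat I')$ through a genuinely complex divisor. Here I would argue that if $u=\mathcal{A}(D)$ is real then $u=\bar u$ in $\Jac$, whence $D\sim\bar D$; for non-special $D$ the effective representative of its class is unique, forcing $\bar D=D$, so the dictionary applies directly. For special $D$ one uses that on a hyperelliptic curve specialness forces $D$ to contain a pair of points exchanged by the involution; stripping off such a pair (which has vanishing Abel image) reduces $D$ to degree $g-3$, and an induction again gives $\ImN\mathcal{A}(D)=\Omega'(S)$ with $|S|\le g-1$. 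Pinning down these reductions — in particular the uniqueness of non-special representatives and the clean half-period bookkeeping modulo the period lattice — is where the real work lies: a naive dimension count does not suffice, since a generic real $g$-plane would meet the $(2g-2)$-dimensional theta divisor, and it is precisely the real structure of Proposition~\ref{P:ReImPeriods} that excludes this on the distinguished component.
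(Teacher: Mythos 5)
Your proof is correct, and it takes a genuinely different route from the paper's. The paper argues purely combinatorially at the level of half-period characteristics: using the dictionary \eqref{HPchar} it observes that the $\Omega'$-parts of the Abel images of single branch points exhaust the collection $\big\{\emptyset,\{1\},\dots,\{g\}\big\}$, that a singular characteristic (multiplicity $\mFr\geqslant 1$, hence a zero of $\sigma$) has $I'$ equal to a union of at most $g-2$ of these, and that the only $I'$ built from $g$ distinct nonempty singletons is $\{1,2,\dots,g\}$. You instead describe the full zero locus of $\sigma$ via the Riemann vanishing theorem as $\mathcal{A}(\mathcal{V}^{g-1})$, and then exploit the real structure: conjugation-invariance of a divisor representing a point of $\mathfrak{U}_{\Re}(I')$ (forced, when $h^0(D)=1$, by uniqueness of the effective representative, and in the special case by stripping involution pairs --- legitimate, since every complete special system on a hyperelliptic curve is a multiple of the $g^1_2$ plus base points, so a special effective divisor literally contains such a pair), together with the count that $g-1$ points cannot oddly occupy all $g$ bounded ovals. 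What your approach buys is precisely the point the paper's proof leaves implicit: the paper only tests half-periods, whereas the theta divisor has real codimension $2$ in $\bar{\Jac}$, so a priori a real $g$-plane could meet it away from half-periods; your conjugation argument excludes this and thus actually proves that $\mathfrak{U}_{\Re}(\hat{I}')$ is entirely free of zeros, not merely free of zero half-periods. What the paper's route buys is brevity and the characteristic bookkeeping reused elsewhere (e.g.\ in Remark~\ref{R:CK} and in the lists of critical values \eqref{WP11CritValN2}, \eqref{WP11CritValN3}). Two small points worth pinning down in your write-up: the imaginary part of $\mathcal{A}(D)$ is well defined only modulo the $\Integer$-span of the $\omega'_k$, so the identification $S=I'$ should be stated modulo the period lattice --- harmless, since distinct subsets give distinct half-lattice classes and the zero set of $\sigma$ is lattice-periodic by quasi-periodicity of $\sigma$; and in your converse construction the leftover $g-1-|I'|$ points can simply all be placed on the unbounded oval over $[e_{2g+1},\infty)$, each contributing zero imaginary part by \eqref{K12PerComp}, which removes the parity case split entirely.
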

\begin{proof}
Due to \eqref{HPchar}, partitions of the form $\{\iota\} \cup \J_{[g/2]}$, $\iota = 1$, \ldots, $2g+1$, 
correspond to half-periods with the part $\Omega'(I')$ such that
$I'$ belongs to the collection $\mathfrak{I}'_1 = \big\{\emptyset$, $\{1\}$, $\{2\}$, \ldots, $\{g\}\big\}$.

If $g=1$, then $\{1\}$ is the required  $\hat{I}'$, and the line 
$\mathfrak{U}_\Re (\{1\}) = s + \tfrac{1}{2} \omega'_1$, $s \in \Real$,  
contains no zeros of the sigma function. Indeed, in the elliptic case ($g=1$), 
we have the following correspondence between characteristics of multiplicity $0$,
represented by partitions with $\I_0 =\{\iota\}$,  $\iota=1$, $2$, $3$, 
and half-periods:
$$
[\{1\}] \sim \tfrac{1}{2} \omega'_1, \quad 
[\{2\}] \sim  \tfrac{1}{2} \omega_1 + \tfrac{1}{2} \omega'_1,\quad 
[\{3\}] \sim \tfrac{1}{2} \omega_1.
$$
The sigma function does not vanish at these half-periods.
Within the fundamental domain, the only zero of the sigma function is located at $u=0$.

If $g>1$, then each $\I_0$ is obtained from $g$ sets $\{\iota\}$ with all indices different.
The corresponding $I'$ is obtained by taking the union of $g$ subsets from the collection $\mathfrak{I}'_1$,
and dropping indices which occur even number of times.
If at least two subsets in this union coincide, then the resulting $I'$ is a union
of $g-2$ or less number of subsets from $\mathfrak{I}'_1$. 
That means, that the subspace $\mathfrak{U}_{\Re}(I')$
contains zeros of the sigma function. Thus, the required $\hat{I}'$ is obtained
by the union of $g$ different and not empty subsets from the collection $\mathfrak{I}'_1$.
This implies $\hat{I}' = \{1,2, \dots, g\}$.
\end{proof}

\begin{prop}\label{P:UIm}
Among $2^g$ subspaces $\mathfrak{U}_{\Im}(I) = \Omega(I) + \ImN \bar{\Jac}$, 
where $\Omega(I)$ is defined by \eqref{ImPerShelf}, and $I$ runs over all subsets of $\{1,\,2,\,\dots,\,g\}$, 
there exists only one subspace which contains no zeros of the sigma function.
With a choice of cycles as on fig.\,\ref{cyclesOdd}, this subspace corresponds to
$\hat{I} = \{1,3,\dots, g\}$, if $g$ is odd, or $\hat{I} = \{2,4,\dots, g\}$, if $g$ is even, that is
 $$\Omega(\hat{I}) = \left\{
 \begin{array}{ll}
 \sum_{k=1}^{\kFr} \tfrac{1}{2}  \omega_{2k}, & g= 2\kFr\\
  \sum_{k=0}^{\kFr} \tfrac{1}{2}  \omega_{2k-1} & g= 2\kFr-1,
 \end{array}
 \right. $$
 and so $u \in \mathfrak{U}_{\Im}(\hat{I}) $ has the form
\begin{gather*}
 u = \imath s + \Omega(\hat{I}), \quad s \in \Real^g.
\end{gather*}
\end{prop}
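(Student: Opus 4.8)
The plan is to mirror the proof of Proposition~\ref{P:URe} with the roles of the real and imaginary directions interchanged. The subspace $\mathfrak{U}_{\Im}(I)$ has fixed real part $\Omega(I)$ and runs along $\ImN\bar{\Jac}$, so the half-periods lying on it are exactly the points $\Omega(I)+\Omega'(I')$ with $I'$ ranging over all subsets; by Proposition~\ref{P:ReImPeriods} each such point has real part $\Omega(I)$, and among them the zeros of $\sigma$ are those of positive multiplicity $\mFr$. As in Proposition~\ref{P:URe}, a subspace is free of $\sigma$-zeros once it carries no such vanishing half-period, and I would defer the passage to non-vanishing on the whole subspace to a reality argument at the end. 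First I would read off from \eqref{HPchar} the real-part index of each branch point: $e_{2k-1}$ contributes the initial segment $S_{k-1}=\{1,\dots,k-1\}$, while $e_{2k}$ contributes $S_k$ and $e_{2g+1}$ contributes $S_g$. Thus, in contrast with the singletons occurring in Proposition~\ref{P:URe}, the branch-point contributions are the nested initial segments $\{S_0=\emptyset,S_1,\dots,S_g\}$, each nonempty segment being realized by exactly two branch points.

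The combinatorial core is that $S_1,\dots,S_g$ form a basis of the index space modulo $2$ (they are unitriangular, $S_k$ containing $k$ but none of $k+1,\dots,g$), and that the real-part index of any half-period is the symmetric difference of the segments attached to its branch points. A half-period of multiplicity $0$ is carried by $g$ branch points, so its real index is a symmetric difference of $g$ segments; conversely, an index that requires all $g$ basis segments can only come from such a configuration. This singles out
\[
\hat{I}=\bigoplus_{k=1}^{g}S_k=\{\,m:\ m\equiv g\ (\modR 2)\,\},
\]
which equals $\{2,4,\dots,g\}$ for even $g$ and $\{1,3,\dots,g\}$ for odd $g$, in agreement with the stated $\Omega(\hat I)$. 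I would then exhibit the realizing divisor: the even-indexed branch points $\{e_2,e_4,\dots,e_{2g}\}$ form a non-singular even characteristic whose segments $S_1,\dots,S_g$ have symmetric difference $\hat I$, while their imaginary-part indices add up to $\{1,\dots,g\}$; hence the half-period $\Omega(\hat I)+\Omega'(\hat I')$ is the common non-singular even point shared with the subspace of Proposition~\ref{P:URe}, a useful consistency check.

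For the uniqueness I would argue, exactly as in Proposition~\ref{P:URe}, that any $I\neq\hat I$ omits some basis segment $S_{k_0}$ and is therefore a symmetric difference of at most $g-1$ of the $S_k$, coincidences of equal segments cancelling two branch points at a time so as to preserve the parity $|\I_\mFr|=g+1-2\mFr$. Such an $I$ is then carried by a half-period built from fewer than $g$ branch points, that is of multiplicity $\mFr\geq1$, which is a zero of $\sigma$ lying on $\mathfrak{U}_{\Im}(I)$. The base cases are transparent: the real-part indices of the positive-multiplicity half-periods are $\{\emptyset\}$ for $g=1$, $\{\emptyset,\{1\},\{1,2\}\}$ for $g=2$, and $\{\emptyset,\{1\},\{2\},\{3\},\{1,2\},\{2,3\},\{1,2,3\}\}$ for $g=3$, leaving $\{1\}$, $\{2\}$, and $\{1,3\}$ as the unique clean index. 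The main obstacle, relative to Proposition~\ref{P:URe}, is that the segments overlap, so their symmetric differences are unions of intervals rather than of singletons; the delicate point is to show that every index short of $\hat I$ is genuinely realized by a half-period of positive multiplicity, keeping the parity bookkeeping and the shift by the vector of Riemann constants $K$ consistent. Finally, I would reproduce the reality argument of Proposition~\ref{P:URe} to pass from the absence of vanishing half-periods to the non-vanishing of $\sigma$ everywhere on $\mathfrak{U}_{\Im}(\hat I)$, where along $u=\imath s+\Omega(\hat I)$ the sigma function is real-valued up to a constant factor.
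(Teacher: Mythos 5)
Your proposal is correct and follows essentially the same route as the paper's proof: like the paper, you read off from \eqref{HPchar} that the real parts contributed by the branch points are the nested initial segments $\big\{\emptyset, \{1\}, \{1,2\}, \dots, \{1,\dots,g\}\big\}$, and conclude that only the symmetric difference of all $g$ nonempty segments, $\hat{I}=\{m : m \equiv g \ (\modR\ 2)\}$, escapes the singular half-periods. Your additions --- the unitriangular $\modR\,2$ basis argument, the explicit realizing divisor $\{e_2, e_4, \dots, e_{2g}\}$ consistent with Remark~\ref{R:CK}, and the honest flagging of the parity bookkeeping and of the passage from half-periods to all zeros of $\sigma$ --- only make explicit steps that the paper's own terse proof (which argues exactly as in Proposition~\ref{P:URe}) leaves implicit.
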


\begin{proof} We use the same idea as in the proof of Proposition~\ref{P:URe}.
Due to \eqref{HPchar}, partitions of the form $\{\iota\} \cup \J_{[g/2]}$, $\iota = 1$, \ldots, $2g+1$, 
correspond to half-periods with the part $\Omega(I)$ such that
$I$ belongs to the collection $\mathfrak{I}_1 = \big\{\emptyset$, $\{1\}$, $\{1,2\}$, \ldots, $\{1,2,\ldots, g\}\big\}$.
The required $\hat{I}$ corresponds to $\I_0$ of cardinality $g$, and is obtained
by the union of $g$ different and not empty subsets from the collection $\mathfrak{I}_1$.
If $g$ is even, then $\hat{I}$ contains only even numbers between $1$ and $g$.
If $g$ is odd, then $\hat{I}$ contains only odd numbers between $1$ and $g$.
\end{proof}

\subsection{Hyperelliptic addition law}
Below, we briefly recall the addition laws on hyperelliptic curves, formulated in \cite{bl2005}.

Let  $\Upsilon_n(u)$, $n=2$, \ldots, $g+2$, be $g$-component vector-functions of $u\in \Jac$.
We introduce a matrix-function $\Qp(u) = \big(\Upsilon_{2}(u)$, $\Upsilon_3(u)$, \ldots, 
$\Upsilon_{g+1}(u)\big)$ with entries $(\textsf{q}_{i,j}(u))$, and a vector-function 
$\bm{q}(u) = \Upsilon_{g+2}(u)$. Actually, on a curve of the form~\eqref{V22g1Eq}
\begin{subequations}
\begin{align}
&\Upsilon_2(u) \equiv \big(\textsf{q}_{i,1}(u)\big) = \big(\wp_{1,2i-1}(u)\big),\quad i=1,\dots, g,\\
&\Upsilon_3(u) \equiv \big(\textsf{q}_{i,2}(u)\big) = -\tfrac{1}{2} \big(\wp_{1,1,2i-1}(u)\big),\\
&\Upsilon_{k+1}(u) \equiv (\textsf{q}_{i,k}(u)) = 
 \textsf{q}_{1,k-2}(u) \Upsilon_{2}(u) +  \Upsilon^\circ_{k-1}(u),\quad k=3,\dots,g+2, \label{UpsilonRule}\\
 &\text{where } \Upsilon^\circ_{k-1}(u) = \big( \textsf{q}_{2,k-2}(u),  \textsf{q}_{3,k-2}(u),\dots,
  \textsf{q}_{g,k-2}(u), 0\big)^t.
\end{align}
\end{subequations}
Let $\bar{\nu} = (\nu_{g+2}, \nu_{g+4},  \dots, \nu_{3g} )^t$, 
and $\nu = (\nu_g, \dots, \nu_2, \nu_1)^t$. 
Let $u_\text{I}$, $u_\text{II}$, $u_\text{III}\in \Jac$, subject to $u_\text{I}+u_\text{II}+u_\text{III} = 0$. 
Then the system
\begin{gather*}
\begin{pmatrix}
1_g & \Qp(u_\text{I}) \\
1_g & \Qp(u_\text{II}) \\
1_g & \Qp(u_\text{III}) 
\end{pmatrix} 
\begin{pmatrix}  \bar{\nu} \\ \nu \end{pmatrix} = -
\begin{pmatrix}  \bm{q}(u_\text{I}) \\ \bm{q}(u_\text{II}) \\ \bm{q}(u_\text{III}) \end{pmatrix} 
\end{gather*}
defines the addition law.

Let a $g\times g$ matrix $\Pp \equiv (\textsf{p}_{i,j})_{i,j=1}^g$ be defined as follows 
$$\Pp = \big(\Upsilon_{2}(u_\text{I})+\Upsilon_{2}(u_\text{II}), \Upsilon_4(u_\text{I})+\Upsilon_4(u_\text{II}), \dots, 
\Upsilon_{2g}(u_\text{I})+\Upsilon_{2g}(u_\text{II})\big),$$
where $\Upsilon_k$, $k>g+2$, are computed by the  rule \eqref{UpsilonRule}. 
Let a vector $\Pi = (\Pi_0$, $\Pi_1$, \ldots, $\Pi_g)$ be defined as follows
\begin{subequations}\label{WPijCompnt}
\begin{gather}
\Pi_0 = 1,\qquad
\Pi_{k} = \frac{1}{k} \sum_{i=1}^{k} \Pi_{k-i} \sum_{j=1}^i  \textsf{p}_{i-j+1,j},\quad
k=1,\dots, g,
\end{gather}
and a vector $\mathrm{N} = (\mathrm{N}_0$, $\mathrm{N}_1$, \ldots, $\mathrm{N}_g)$, $\mathrm{N}_0=-1$,
depending on the parity of $g$, be
\begin{gather}\label{HkCompnt}
\begin{split}
&g=2\kFr-1:\qquad
\mathrm{N}_k = \sum_{i=1}^k \nu_{2k-2i+1} \nu_{2i -1} - 
\sum_{j=0}^k \lambda_{2k-2j} \sum_{i=0}^j \nu_{2j-2i} \nu_{2i},\\
&g=2\kFr:\qquad\quad\  \ 
\mathrm{N}_k = - \sum_{i=0}^k \nu_{2k-2i} \nu_{2i} + 
\sum_{j=1}^k \lambda_{2k-2j} \sum_{i=1}^j \nu_{2j-2i+1} \nu_{2i-1},
\end{split}
\end{gather}
\end{subequations}
where $\nu_0=1$, and $\lambda_0=1$.
Then addition formulas for $\wp_{1,2i-1}$ are given by
\begin{gather}\label{WPijAddLaw}
\wp_{1,2i-1}(u_\text{III}) = \sum_{j=0}^i \mathrm{N}_{i-j} \Pi_{j}.
\end{gather}
They work in an arbitrary genus $g$, and $\nu_k$ is supposed to be zero if 
there is no such entry of $\bar{\nu}$, $\nu$ in this genus.
In particular, 
\begin{gather}\label{WP11AddLaw}
\begin{split}
&g=2\kFr-1:\qquad
\wp_{1,1}(u_\text{III}) = \nu_1^2 - 2 \nu_2 - \lambda_2 - \wp_{1,1}(u_\text{I}) - \wp_{1,1}(u_\text{II}),\\
&g=2\kFr:\qquad\quad\  \ 
\wp_{1,1}(u_\text{III}) = \nu_1^2 - 2 \nu_2 - \wp_{1,1}(u_\text{I}) - \wp_{1,1}(u_\text{II}),
\end{split}
\end{gather}
where $\nu_2=0$ in the case of genus $1$. 

Expressions for $\nu_k$ in terms of $\wp_{1,2i-1}(u_\text{I})$, 
$\wp_{1,2i-1}(u_\text{II})$, $\wp_{1,1,2i-1}(u_\text{I})$, $\wp_{1,1,2i-1}(u_\text{II})$ are obtained from the system
\begin{gather}\label{EtaEqs}
\begin{pmatrix}
1_g & \Qp(u_\text{I}) \\
1_g & \Qp(u_\text{II})
\end{pmatrix} 
\begin{pmatrix}  \bar{\nu} \\ \nu \end{pmatrix} = 
- \begin{pmatrix}  \bm{q}(u_\text{I}) \\ \bm{q}(u_\text{II}) \end{pmatrix}.
\end{gather}
Expressions for $\wp_{1,1,2i-1}(u_\text{III})$ in terms of $\wp_{1,2i-1}(u_\text{III})$ and $\nu_k$
are obtained from 
\begin{gather}\label{WPijkAddLaw}
\bar{\nu} + \Qp(u_\text{III}) \nu = - \bm{q}(u_\text{III}),
\end{gather}
and the substitution \eqref{WP11AddLaw} turns these into addition formulas.

\begin{rem}
The addition law is obtained from the entire 
rational function $\mathcal{R}_{3g}(x,y) = y \nu_y(x) + \nu_x(x)$ of weight $3g$,
where $\nu_y(x) = \sum_{i=1}^{[(g-1)/2]} \nu_{g-1-2i} x^i$, and $\nu_x(x) = \sum_{i=0}^{[3g/2]} \nu_{3g-2i} x^i$.
With the help of the solution \eqref{EnC22g1} of the Jacobi inversion problem, 
$y$ is eliminated from $\mathcal{R}_{3g}$, and the degree of $x$ is reduced to $g-1$.
Pre-images of $u_\text{I}$, $u_\text{II}$, $u_\text{III}$ 
are supposed to be zeros of $\mathcal{R}_{3g}$. Thus, coefficients of $\mathcal{R}_{3g}$ 
reduced to a degree $g-1$ polynomial in $x$ produce $3g$ equations \eqref{EtaEqs}, \eqref{WPijkAddLaw}. 
The expressions \eqref{WPijAddLaw} are derived from the equality
$$\nu_y^2 f(x,- \nu_x / \nu_y) = 
\mathcal{R}_{2g}(x,y;u_\text{I}) \mathcal{R}_{2g}(x,y;u_\text{II}) \mathcal{R}_{2g}(x,y;u_\text{III}),$$
which reflects the fact that $u_\text{I}$, $u_\text{II}$, $u_\text{III}$  form the divisor of zeros of $\mathcal{R}_{3g}$.
\end{rem}

\subsection{Real-valued $\wp$-functions}
Through the help of the addition law one can obtain expressions for
$\wp_{1,2i-1}(u+ \imath \upsilon)$ and $\wp_{1,1,2i-1}(u+ \imath \upsilon)$
 in terms of $\wp_{1,2i-1}(u)$, $\wp_{1,2i-1}(\imath \upsilon)$, 
$\wp_{1,1,2i-1}(u)$, $\wp_{1,1,2i-1}( \imath \upsilon)$,
$i=1$, \ldots, $g$.
The mentioned $2g$ $\wp$-functions serve as generators in the differential field
of all multiply periodic functions on a hyperelliptic curve of genus $g$, see \cite{bl2008}. 
These $2g$ functions
arise in the solution \eqref{EnC22g1} of the Jacobi inversion problem.

Below, we prove that $\wp_{i,j}$ on  subspaces $\mathfrak{U}_\Re (I')$, and $\mathfrak{U}_\Im (I)$
are real-valued. 

\begin{prop}\label{P:WPiv0}
Let  $\upsilon \in \Real^g$ be fixed, and  $\wp_{1,1,2i-1}(\imath \upsilon)=0$,
that is $\imath \upsilon$ is a  half-period. 
Then for all $u\in \Real^g$,
functions $\wp_{1,2i-1}(u+ \imath \upsilon)$
and $\wp_{1,1,2i-1}(u+ \imath \upsilon)$, $i=1$, \ldots, $g$, are real-valued.
\end{prop}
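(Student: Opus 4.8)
The plan is to reduce the whole statement to a single conjugation symmetry of the $\wp$-functions, combined with the fact that a half-period $\imath\upsilon$ makes $2\imath\upsilon$ a genuine lattice vector. First I would record the symmetry
\[
\overline{\wp_{1,2i-1}(u)} = \wp_{1,2i-1}(\bar u),\qquad
\overline{\wp_{1,1,2i-1}(u)} = \wp_{1,1,2i-1}(\bar u),
\]
valid on all of $\Complex^g$. To obtain it I would invoke Proposition~\ref{P:ReImPeriods}: with all branch points real and the cycles of fig.~\ref{cyclesOdd}, the first-kind matrix $\omega$ is real and $\omega'$ purely imaginary, so $\tau=\omega^{-1}\omega'$ is purely imaginary; the same cycle argument applied to the second-kind differentials \eqref{K2DifsGen} (whose coefficients $\lambda_{2k}$ are real in the setting \eqref{V22g1Eq}) makes $\eta$ real and $\eta'$ purely imaginary, so $\varkappa=\eta\omega^{-1}$ is real. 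For purely imaginary $\tau$ the series \eqref{ThetaDef} obeys $\overline{\theta(v;\tau)}=\theta(\bar v;\tau)$, while the Gaussian prefactor in \eqref{SigmaThetaRel} satisfies the matching rule $\overline{\exp({-}\tfrac12 u^t\varkappa u)}=\exp({-}\tfrac12 \bar u^t\varkappa\bar u)$ because $\varkappa$ is real; together these give $\overline{\sigma(u)}=\mathrm{const}\cdot\sigma(\bar u)$, the characteristic $[K]$ contributing only a $u$-independent phase. Since $\wp_{i,j}=-\partial_{u_i}\partial_{u_j}\log\sigma$ and $\wp_{i,j,k}=-\partial_{u_i}\partial_{u_j}\partial_{u_k}\log\sigma$ annihilate any constant factor, the displayed symmetry follows.

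Second I would use the hypothesis. Since $\imath\upsilon$ is a half-period (equivalently, by \eqref{R2g1}, the vanishing $\wp_{1,1,2i-1}(\imath\upsilon)=0$ forces the divisor of $\imath\upsilon$ to consist of branch points), the vector $2\imath\upsilon$ lies in the period lattice $\mathfrak{P}$, so every $\wp$-function is invariant under the shift by $2\imath\upsilon$. Now fix $u\in\Real^g$ at a point where $\wp_{1,2i-1}(u+\imath\upsilon)$ is finite; then $\overline{u+\imath\upsilon}=u-\imath\upsilon=(u+\imath\upsilon)-2\imath\upsilon$, whence
\[
\overline{\wp_{1,2i-1}(u+\imath\upsilon)}
=\wp_{1,2i-1}(u-\imath\upsilon)
=\wp_{1,2i-1}(u+\imath\upsilon),
\]
the first equality by the conjugation symmetry and the second by $2\imath\upsilon$-periodicity; the same two steps apply verbatim to $\wp_{1,1,2i-1}$. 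Each value therefore equals its own conjugate and is real.

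Third, the subsection's set-up suggests an equivalent but more explicit route through the addition law, which is what the later numerical computations actually require. Taking $u_{\text{I}}=u$, $u_{\text{II}}=\imath\upsilon$, $u_{\text{III}}=-(u+\imath\upsilon)$, the entries of $\Qp(u_{\text{I}})$ and $\bm{q}(u_{\text{I}})$ are real by the symmetry of the first paragraph on $\Real^g$, while at the half-period $\wp_{1,1,2i-1}(\imath\upsilon)=0$ and the $\wp_{1,2i-1}(\imath\upsilon)$ are, by \eqref{R2g}, elementary symmetric functions of the real branch-point abscissae and hence real; solving the real linear system \eqref{EtaEqs} yields real $\nu_k$, and then \eqref{WPijAddLaw} and \eqref{WPijkAddLaw}, together with the real $\lambda_{2k}$ entering \eqref{HkCompnt}, give real $\wp_{1,2i-1}(u_{\text{III}})$ and $\wp_{1,1,2i-1}(u_{\text{III}})$; evenness of $\wp_{i,j}$ and oddness of $\wp_{i,j,k}$ then transfer these values to $u+\imath\upsilon$. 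I expect the only real obstacle to lie in the first paragraph, namely pinning down $\overline{\sigma(u)}=\mathrm{const}\cdot\sigma(\bar u)$ rigorously: checking that the characteristic $[K]$ and the constant prefactors contribute only a $u$-independent phase (killed by the logarithmic derivatives), and confirming the reality of $\eta$ and the imaginarity of $\eta'$ that make $\varkappa$ real. Once this symmetry is secured, both routes close at once.
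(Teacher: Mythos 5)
Your proof is correct, but your main argument takes a genuinely different route from the paper's. The paper's own proof stays inside the addition-law framework of this subsection: it first notes that $\wp_{1,1,2i-1}$ vanishes at half-periods by \eqref{R2g1}; then uses parity --- $\wp_{i,j}$ even, $\wp_{i,j,k}$ odd, with real expansion coefficients since the curve is real --- to conclude that $\wp_{1,2i-1}(\imath\upsilon)$ is real while $\wp_{1,1,2i-1}(\imath\upsilon)$ is purely imaginary; and finally observes that in the addition-law expressions for $\wp_{1,2i-1}(u+\imath\upsilon)$ and $\wp_{1,1,2i-1}(u+\imath\upsilon)$ in terms of the $2g$ generators at $u$ and at $\imath\upsilon$, the only source of $\imath$ is $\wp_{1,1,2i-1}(\imath\upsilon)$, which vanishes by hypothesis --- your third paragraph reproduces essentially this argument. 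Your primary route instead rests on the conjugation symmetry $\overline{\wp_{i,j}(u)}=\wp_{i,j}(\bar u)$, $\overline{\wp_{i,j,k}(u)}=\wp_{i,j,k}(\bar u)$, combined with the elementary remark that $2\imath\upsilon$ is a full lattice vector, giving $\overline{\wp(u+\imath\upsilon)}=\wp(u-\imath\upsilon)=\wp(u+\imath\upsilon)$ in two lines. This is sound: purely imaginary $\tau$ gives $\overline{\theta(v;\tau)}=\theta(-\bar v;\tau)=\theta(\bar v;\tau)$; the half-integer characteristic $[K]$ contributes, after a lattice shift in the theta argument, a $v$-dependent phase that matches on both sides, so $\overline{\theta[K](v;\tau)}=c\,\theta[K](\bar v;\tau)$ with a constant $c$ (and even a residual exponential of a linear form would be killed by the second and third logarithmic derivatives); and $\varkappa=\eta\omega^{-1}$ is indeed real because the second-kind differentials \eqref{K2DifsGen} have real coefficients, so the $\mathfrak{a}$-periods $\eta_k$ in \eqref{K12PerComp} are real by the same segment-wise integration as in Proposition~\ref{P:ReImPeriods} --- a small extension of that proposition, which the paper states only for first-kind periods, and which you correctly identify as the load-bearing step (alternatively one may quote the reality of the coefficients of the sigma series in $u$ and the $\lambda$'s). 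Comparing what each approach buys: yours is shorter, needs neither the addition law nor the rational structure of \eqref{EtaEqs}--\eqref{WPijAddLaw}, applies verbatim to all $\wp_{i,j}$ and $\wp_{i,j,k}$ rather than only the generators with first index $1$, and --- combined with evenness/oddness and $2u\in\mathfrak{P}$ --- also disposes of the companion Proposition~\ref{P:WPu0} immediately, where the paper must track carefully which $\nu_k$ are real and which purely imaginary; the paper's route, in exchange, produces explicit real-rational formulas for the translated functions in terms of the generators, which is what its proof of Proposition~\ref{P:WPu0} actually exploits. Two minor points you handled properly: reality is asserted only where the functions are finite, and the passage from the hypothesis $\wp_{1,1,2i-1}(\imath\upsilon)=0$ to $\imath\upsilon$ being a half-period (hence $2\imath\upsilon\in\mathfrak{P}$) is justified through \eqref{R2g1}, since vanishing of all these odd functions forces the underlying divisor to consist of branch points.
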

\begin{proof}
The fact that $\wp_{1,1,2i-1}$ vanishes at all half-periods 
on a hyperelliptic curve follows immediately  from \eqref{R2g1}.

Next, recall that $\wp_{i,j}$ are even functions, and so
$\wp_{i,j}(\imath \upsilon) = \widehat{\wp}_{i,j}(\upsilon)$ are real-valued. 
On the other hand, $\wp_{i,j,k}$ are odd, and so  
$\wp_{i,j,k}(\imath \upsilon)= \imath \widehat{\wp}_{i,j,k}(\upsilon)$ are purely imaginary.
Since $\imath$ emerges only from
$\wp_{1,1,2i-1}(\imath \upsilon)$, which vanishes  according to the condition,
all functions $\wp_{1,2i-1}(u+\imath \upsilon)$
and $\wp_{1,1,2i-1}(u+ \imath \upsilon)$ are real-valued.
\end{proof}

\begin{prop}\label{P:WPu0}
Let  $u \in \Real^g$ be fixed, and  $\wp_{1,1,2i-1}(u)=0$,
that is $u$ is a  half-period.  Then for all $\upsilon \in \Real^g$, functions
 $\wp_{1,2i-1}(u+ \imath \upsilon)$, $i=1$, \ldots, $g$, 
are real-valued,
and $\wp_{1,1,2i-1}(u+ \imath \upsilon)$, $i=1$, \ldots, $g$, are purely imaginary.
\end{prop}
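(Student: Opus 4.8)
The plan is to read off the reality type of $\wp_{1,2i-1}(u+\imath\upsilon)$ and $\wp_{1,1,2i-1}(u+\imath\upsilon)$ directly from the behaviour of the $\wp$-functions under complex conjugation, rather than by grinding through the explicit addition formulas. The starting point is the reality property forced by the hypothesis that all branch points, hence all coefficients $\lambda_{2i}$, are real: the sigma series of \cite{bl2008} then has real coefficients, so $\overline{\sigma(v)}=\sigma(\bar v)$ and consequently $\overline{\wp_J(v)}=\wp_J(\bar v)$ for every multi-index $J$. This is the same fact used implicitly in Proposition~\ref{P:WPiv0}, where it yields real values of $\wp_{i,j}$ and purely imaginary values of $\wp_{i,j,k}$ at the purely imaginary argument $\imath\upsilon$.

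First I would apply this with $v=u+\imath\upsilon$. Since $u,\upsilon\in\Real^g$ we have $\bar v=u-\imath\upsilon$, so $\overline{\wp_J(u+\imath\upsilon)}=\wp_J(u-\imath\upsilon)$. The decisive step is to write $u-\imath\upsilon=-(u+\imath\upsilon)+2u$ and to note that, because $u$ is a \emph{real} half-period, $2u$ is a sum of the real columns $\omega_k$ (Proposition~\ref{P:ReImPeriods}), hence a genuine vector of the period lattice $\mathfrak{P}$. Full-lattice periodicity of the $\wp$-functions then collapses $2u$, giving $\wp_J(u-\imath\upsilon)=\wp_J\big({-}(u+\imath\upsilon)\big)$, after which parity finishes the argument: the even functions satisfy $\overline{\wp_{1,2i-1}(u+\imath\upsilon)}=\wp_{1,2i-1}(u+\imath\upsilon)$, hence are real-valued, while the odd functions satisfy $\overline{\wp_{1,1,2i-1}(u+\imath\upsilon)}=-\wp_{1,1,2i-1}(u+\imath\upsilon)$, hence are purely imaginary. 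This is exactly the two assertions, and it makes transparent why the conclusion differs from Proposition~\ref{P:WPiv0}: there $\imath\upsilon$ is the half-period, so $2\imath\upsilon\in\mathfrak{P}$ and the sign flip is absent for both parity classes.

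The only delicate points in this route are the two facts invoked at the outset, and I expect the second to be the main obstacle to state cleanly: that the sigma series is genuinely real for a real curve (so that $\overline{\wp_J(v)}=\wp_J(\bar v)$), and above all that the half-period satisfies $2u\in\mathfrak{P}$, so that lattice periodicity, not merely quasi-periodicity, may be applied. The latter is precisely where the hypothesis ``$\wp_{1,1,2i-1}(u)=0$, i.e. $u$ is a half-period'' together with Proposition~\ref{P:ReImPeriods} enters, and where the argument would break down for a non-special real $u$. As an alternative that stays within the addition-law machinery of this subsection, one can instead solve \eqref{EtaEqs} for the $\nu_k$ and track how $\imath$ propagates: at $u$ all $\wp_{1,2i-1}(u)$ are real and all $\wp_{1,1,2i-1}(u)$ vanish, whereas at $\imath\upsilon$ the $\wp_{1,2i-1}(\imath\upsilon)$ are real and the $\wp_{1,1,2i-1}(\imath\upsilon)$ are purely imaginary; a short induction on \eqref{UpsilonRule} should then show that $\nu_k$ is real for even $k$ and purely imaginary for odd $k$, so that \eqref{WPijAddLaw} (where only like-parity products of $\nu$'s occur) gives real $\wp_{1,2i-1}(u+\imath\upsilon)$ and \eqref{WPijkAddLaw} gives purely imaginary $\wp_{1,1,2i-1}(u+\imath\upsilon)$. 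Here the main obstacle is the parity bookkeeping of all the $\nu_k$ and of $\bar\nu$ through the $2g\times 2g$ system; the vanishing of $\wp_{1,1,2i-1}(u)$ is exactly what prevents the real and imaginary channels from mixing and makes the count come out uniformly in $g$.
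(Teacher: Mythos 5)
Your primary argument is correct, and it takes a genuinely different route from the paper's proof. The paper proves Proposition~\ref{P:WPu0} by the addition-law bookkeeping that you only sketch as your fallback: it sets $u_{\text{I}}=u$, $u_{\text{II}}=\imath\upsilon$ in \eqref{EtaEqs}, solves by Cramer's rule separately for $g=2\kFr-1$ and $g=2\kFr$ (using that the columns of $\Qp(u)$ linear in the odd functions $\wp_{1,1,2i-1}$ vanish at $u$, while at $\imath\upsilon$ they are purely imaginary), concludes that $\nu_k$ is purely imaginary for odd $k$ and real for even $k$, observes that the $\mathrm{N}_k$ of \eqref{HkCompnt} contain only like-parity products $\nu_{2j-2i+1}\nu_{2i-1}$ and are therefore real, and then reads the two claims off \eqref{WPijAddLaw} and \eqref{WPijkAddLaw}. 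Your main route instead rests on three structural facts: the conjugation symmetry $\overline{\wp_J(v)}=\wp_J(\bar v)$ for a curve with all $\lambda_{2i}$ real (valid via the real-coefficient sigma series of \cite{bl2008}, and insensitive to the constant $C$ in \eqref{SigmaThetaRel} since the $\wp$'s are logarithmic derivatives); full-lattice periodicity applied through the decomposition $u-\imath\upsilon=-(u+\imath\upsilon)+2u$ with $2u\in\mathfrak{P}$; and parity of $\wp_{i,j}$ versus $\wp_{i,j,k}$. All three steps are sound, and you correctly locate where the hypothesis enters: $2u\in\mathfrak{P}$ holds for any half-period by definition, while the realness of $u$ is what makes $\overline{u+\imath\upsilon}=u-\imath\upsilon$ usable — so the vanishing $\wp_{1,1,2i-1}(u)=0$ is used only through its equivalence with $u$ being a half-period, which the proposition's wording grants. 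Your argument buys brevity, uniformity in $g$ (no parity split on the genus), the stronger conclusion for \emph{all} even-order and odd-order $\wp_J$ rather than only $\wp_{1,2i-1}$ and $\wp_{1,1,2i-1}$, and, as you note, an immediate re-proof of Proposition~\ref{P:WPiv0} in which the sign flip is absent because $2\imath\upsilon\in\mathfrak{P}$ there. What the paper's computational route buys in exchange is explicit addition-law expressions for the values $\wp_{1,2i-1}(u+\imath\upsilon)$, $\wp_{1,1,2i-1}(u+\imath\upsilon)$ in terms of data at $u$ and $\imath\upsilon$, which feeds the paper's computational agenda in section~\ref{s:NLW}; as a proof of the reality statement alone, your route is cleaner and more transparent.
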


\begin{proof}
From the structure of $\Qp$ and $\bm{q}$ in \eqref{EtaEqs} we see the following.

If $g = 2\kFr-1$, then $\Qp$ contains $\kFr-1$ columns linear in odd functions $\wp_{1,1,2i-1}$,
and $\bm{q}$ is linear in $\wp_{1,1,2i-1}$. These columns of $\Qp(\imath \upsilon)$ 
and $\bm{q}(\imath \upsilon)$ have purely imaginary entries.
Due to $\wp_{1,1,2i-1}(u)=0$, the corresponding columns of $\Qp(u)$ vanish, as well as $\bm{q}(u)$.
The system \eqref{EtaEqs} with $u_{\text{I}}=u$ and $u_{\text{II}}=\imath \upsilon$ is solved by Cramer's rule. 
Thus, all entries of $\bar{\nu}$, which are indexed
by odd numbers, have purely imaginary values, as well as the entries $\nu_{k}$ of  $\nu$ with odd $k$.
The entries $\nu_{k}$ with even $k$ are real-valued. 

If $g = 2\kFr$, then $\Qp$ contains $\kFr$ columns linear in odd functions $\wp_{1,1,2i-1}$,
and $\bm{q}$ is expressed in terms of even functions $\wp_{1,2i-1}$ only.
Thus, all entries of $\bar{\nu}$, which are indexed
by even numbers, are real-valued, as well as the entries $\nu_{k}$ of $\nu$ with even~$k$.
The entries $\nu_{k}$ with odd $k$ are purely imaginary. 

Therefore,  $\nu_{k}$ with odd $k$ are purely imaginary, and
with even $k$ are real-valued, in any genus. 

Next, we look at the expressions for $\textsf{N}_k$
given by \eqref{HkCompnt}. Evidently, all $\textsf{N}_k$ are real-valued,
since $\nu_k$ with odd $k$ appear in quadratic terms $\nu_{2j-2i+1} \nu_{2i-1}$,
which are real-valued. Therefore, expressions 
for $\wp_{1,2i-1}(u+ \imath \upsilon)$ given by \eqref{WPijAddLaw} 
with $u+ \imath \upsilon = - u_{\text{III}}$ are real-valued.

Expressions for $\wp_{1,1,2i-1}(u+ \imath \upsilon)$ are obtained from 
\eqref{WPijkAddLaw} with $u+ \imath \upsilon = - u_{\text{III}}$. If $g = 2\kFr-1$, then $\nu_k$ with odd $k$
are multiplied by real-valued columns expressed in terms of even functions $\wp_{1,2i-1}(u_{\text{III}})$ only.
Since $\nu_k$ with odd $k$ are purely imaginary, the expression 
$\bar{\nu} + \sum_{i=1}^\kFr \nu_{2i-1} \Upsilon_{2\kFr-2i+2}(u_{\text{III}})$
has purely imaginary entries. On the other hand, $\nu_k$ with even $k$ are real-valued,
and they arise as multiples of the columns linear in odd functions $\wp_{1,1,2i-1}(u_{\text{III}})$.
Thus, $\wp_{1,1,2i-1}(u_{\text{III}})$ are purely imaginary. Similar considerations in the case of
$g = 2\kFr$ bring to the same conclusion.
\end{proof}

 Proposition~\ref{P:ReImPeriods}
shows that all $\Omega(I)$ are real, and all $\Omega'(I')$ are purely imaginary,
if all branch points of the curve are real. Therefore,
according to Proposition~\ref{P:WPiv0}, on 
$\mathfrak{U}_\Re (I') = \Omega'(I') + \ReN \bar{\Jac}$ functions
$\wp_{1,2i-1}$ and $\wp_{1,1,2i-1}$, $i=1, \ldots$, $g$, are real-valued.
According to Proposition~\ref{P:WPu0}, on $\mathfrak{U}_\Im (I) = \Omega(I) + \ImN \bar{\Jac}$
functions $\wp_{1,2i-1}$, $i=1, \ldots$, $g$, are real-valued, and
$\wp_{1,1,2i-1}$, $i=1, \ldots$, $g$, take only purely imaginary values.

In the finite-gap solution  \eqref{KdVSolRealCond} we choose 
$$\bm{C} = \Omega'(\hat{I}') + \mathbf{c} = \sum_{k=1}^g \tfrac{1}{2}  \omega'_k +
(\mathrm{c}_1, \mathrm{c}_3, \mathrm{c}_5, \dots, \mathrm{c}_{2N-1})^t,$$ 
with $\hat{I}'$ defined in Proposition~\ref{P:URe}, and
arbitrary real $\mathrm{c}_{2n-1}$, $n=1$, \ldots, $N$.

\begin{rem}\label{R:CK}
Note that the vector of Riemann constants $K$ is expressed in terms of
 $\Omega'(\hat{I}')$ and $\Omega(\hat{I})$ defined in Propositions~\ref{P:URe} and \ref{P:UIm}, namely   
 $$K = \omega^{-1} \big(\Omega(\hat{I}) + \Omega'(\hat{I}') \big).$$
 Therefore, it is convenient to assign $\bm{C} = \omega K + \mathbf{c}$.
\end{rem}

\begin{rem}\label{r:Comparison}
Comparing the KdV solution in the form of \eqref{KdVSolRealCond} with 
\eqref{KdVSolLogTheta}, we find the following correspondence:
\begin{gather*}
\bm{U} = -\tfrac{1}{2} \omega^{-1}_1,\quad \bm{W} = -\tfrac{1}{2} \omega^{-1}_2,\quad
\bm{D} =  \omega^{-1} \bm{C},\quad c = -\tfrac{1}{4} \varkappa_{1,1},
\end{gather*}
where $\omega^{-1}_k$ denotes the $k$-th column of the matrix $\omega^{-1}$,
and $\varkappa = \eta \omega^{-1}$.
\end{rem}

\subsection{Summary}
In $\bar{\Jac} = \Complex^g$, where the Jacobian variety of the spectral curve
is embedded, among $2^g$ affine subspaces $\mathfrak{U}_{\Re}$ parallel to the real axes subspace $\ReN \bar{\Jac}$ 
and obtained by half-period translations,
there exists one subspace where $\wp$-functions are bounded.
With the choice of cycles as on fig.\,\ref{cyclesOdd},
this subspace is $\sum_{k=1}^g \tfrac{1}{2}  \omega'_k + \ReN \bar{\Jac}$.
And among $2^g$ affine subspaces $\mathfrak{U}_{\Im}$ parallel to the imaginary axes subspace $\ImN \bar{\Jac}$ 
and obtained by half-period translations,
 there exists one subspace $\sum_{i=0}^{[(g-1)/2]} \tfrac{1}{2}  \omega_{g-2i} + \ImN \bar{\Jac}$, 
 where  $\wp$-functions are bounded.  Then, by means of the addition law on the curve,
 it was proven that functions $\wp_{1,2i-1}$ on all $\mathfrak{U}_{\Re}$ and $\mathfrak{U}_{\Im}$ subspaces are real-valued,
 and  functions $\wp_{1,1,2i-1}$ are real-valued on subspaces $\mathfrak{U}_{\Re}$,
 and purely imaginary-valued on subspaces $\mathfrak{U}_{\Im}$.
Thus, if all finite branch points of the spectral curve are real, 
as required for quasi-periodic solutions of the KdV equation, 
then $\sum_{k=1}^g \tfrac{1}{2}  \omega'_k + \ReN \bar{\Jac}$ serves as the domain
of the bounded real-valued $\wp_{1,1}$-function.

\section{Non-linear waves}\label{s:NLW}
In this section we present effective numerical computation of quasi-periodic 
finite-gap solutions of the KdV equation.

An analytical approach to computation of $\wp$-functions is used, for more details see \cite{BerCalc24}.
Once a curve is chosen, periods  of the first kind \eqref{K12PerComp}, and the corresponding 
periods  of the second kind  are computed. Then $\wp$-functions are calculated by the formulas
\eqref{WPdefComp}. All computations and graphical representation are performed in Wolfram Mathematica~12.
The function \texttt{NIntegrate} is used for numerical integration of periods,
and \texttt{NSolve} for computing branch points in genera $2$ and $3$.
The default precision is applied, which is sufficient for  graphical representation.
The theta function is approximated by a partial sum of \eqref{ThetaDef}, 
$|n_i| \leqslant 5$  is sufficient.

\subsection{Numerical computation}
First, the curve \eqref{CurveGN} is transformed to the canonical form \eqref{V22g1Eq}
as follows
\begin{multline}\label{CurveGNCanon}
 0 = \frac{1}{\rmb^2} F(z,w) \equiv - \Big( \frac{w}{\rmb}\Big)^2 + z^{2N+1} 
 + \frac{c_{2N}}{\rmb^2 } z^{2N} + \cdots + \frac{c_N}{\rmb^2 } z^N \\
 + \frac{h_{N-1}}{\rmb^2 } z^{N-1} + \dots + \frac{h_1}{\rmb^2 } z + \frac{h_0}{\rmb^2 },
\end{multline}
where $z=x$, $w/ \rmb = y$, and so
 $\lambda_{4g+2-2n} = h_n / \rmb^2$.
The basis differentials of the first and second kind \eqref{K12Difs}
correspond to this canonical form. In this setup the sigma function 
is defined uniquely from heat equations, and can be obtained in the form of a series
in coordinates $u$ of the Jacobian variety, and parameters $\lambda_{2i+2}$ of the curve.
However, such a series converges slowly, which is not acceptable for computational purposes.
The situation with $\wp$-functions is even worse.

So the formula \eqref{SigmaThetaRel} is employed to compute the sigma function.
Correspondingly, we compute $\wp$-functions by
\begin{gather}\label{WPdefComp}
\begin{split}
&\wp_{i,j}(u)  = \varkappa_{i,j} - \frac{\partial^2}{\partial u_i \partial u_j } \log \theta[K](\omega^{-1} u; \omega^{-1} \omega'),\\
& \wp_{i,j,k}(u)  =- \frac{\partial^3}{\partial u_i \partial u_j \partial u_k} \log \theta[K](\omega^{-1} u; \omega^{-1} \omega').
\end{split}
\end{gather}

The period matrices $\omega$,  $\omega'$,  $\eta$ are computed from 
the differentials \eqref{K1Difs}, \eqref{K2Difs} along the canonical cycles, see
fig.\,\ref{cyclesOdd}. By $\varkappa_{i,j}$ entries of the symmetric matrix $\varkappa = \eta \omega^{-1}$
are denoted. Actually, columns of the matrices $\omega$,  $\omega'$,  $\eta$
are computed as follows
\begin{gather}\label{K12PerComp}
\begin{split}
  &\omega_k = 2\int_{e_{2k-1}}^{e_{2k}} \rmd u,\qquad\quad \eta_k = 2\int_{e_{2k-1}}^{e_{2k}} \rmd r,\\
  &\omega'_k = - 2 \sum_{i=1}^k \int_{e_{2i-2}}^{e_{2i-1}} \rmd u  
  = 2 \sum_{i=k}^g  \int_{e_{2i}}^{e_{2i+1}} \rmd u.
  \end{split}
\end{gather}
The latter equality holds due to
\begin{gather*}
\sum_{i=0}^g  \int_{e_{2i}}^{e_{2i+1}} \rmd u = 0.
\end{gather*}

Quasi-periodic solutions of the KdV equation arise when all finite branch points $\{e_i\}_{i=1}^{2N+1}$ of the spectral curve \eqref{CurveGNCanon} are real. In what follows, we assign $c_{2N}=0$, as we have in the KdV equation.
On the other hand, by the transformation $z \mapsto z - (2g+1)^{-1} c_{2N} \rmb^{-2}$ 
the term $z^{2N}$ is eliminated from \eqref{CurveGNCanon}.

Below, we illustrate the proposed approach to
computing quasi-periodic solutions of the KdV equation
 in genera 2 and 3, and compare the obtained result with
the known one in genus $1$.

\subsection{Genus $1$}
The hamiltonian system of the KdV equation in $\M_1^\circ$ possesses
 the spectral curve 
 \begin{equation}\label{G1HamCanon} 
-w^2 + \rmb^2 z^3 + c_2 z^2 + c_1 z + h_0 = 0.
 \end{equation}
 In this particular case, $c_2$ is kept non-vanishing for a while.
Looking for real-valued solutions, 
we suppose that all  parameters of the curve: $\rmb$, $c_2$, $c_1$, $h_0$ are real,
and  the spectral curve has three real branch points  $e_1 < e_2 < e_3$.

According to \eqref{JIPb}, uniformization of \eqref{G1HamCanon} is given by
\begin{gather}\label{UniformG1}
z = \wp_{1,1}(u),\qquad w = - \tfrac{1}{2} \rmb \wp_{1,1,1}(u),
\end{gather}
where $u = u_1$ is complex, in general. Note, that $\wp$-function in \eqref{UniformG1} 
corresponds to the curve \eqref{G1HamCanon}, and relates
to the Weierstrass function $\wp(u;g_2,g_3)$ as follows
\begin{gather*}
\begin{split}
\wp_{1,1} (u) &= \wp(u;g_2,g_3) - \tfrac{1}{3} c_2 \rmb^{-2}, \\ 
&g_2 = - 4 \big(c_1- \tfrac{1}{3}  c_2^2 \rmb^{-2}\big)\rmb^{-2},\\
&g_3 = - 4 \big(h_0 - \tfrac{1}{3} c_1 c_2 \rmb^{-2} + \tfrac{2}{27} c_2^3 \rmb^{-4}\big)\rmb^{-2}.
\end{split}
\end{gather*}

Applying the reality conditions to \eqref{KdVSolRealCond},
we find bounded real-valued solutions of  \eqref{KdVEqDegen}, 
and so the stationary KdV equation ($N=1$):
\begin{align}\label{betaKdVN1}
&\beta (\x) = -\rmb \wp_{1,1} (- \rmb \x + \tfrac{1}{2}\omega').
\end{align}

Let $\M_1^\circ$ be fixed by $4\rmb^2 =1$,  $c_2=0$, $c_1 = -588$.
Let the hamiltonian $h_0$ attains three values with different mutual positions of branch points:

\begin{tabular}{ll}
$h_0 = -10894$ & $e_1 = 13 - 3\sqrt{205} \approx -29.95$, \\
& $e_2 = -26$, \\
& $e_3 = 13 + 3\sqrt{205}\approx 55.95$; \\
$h_0 = 0$ & $e_1 = - 28 \sqrt{3} \approx -49.50$, \\
& $e_2 = 0$, \\
& $e_3 = 28 \sqrt{3} \approx 49.50$;\\
$h_0 = 10894$ & $e_1 = -13 - 3\sqrt{205} \approx -55.95$, \\
& $e_2 = 26 $, \\
& $e_3 =  -13 + 3\sqrt{205} \approx 29.95$. 
\end{tabular}\\
The corresponding fundamental domains and shapes of \eqref{betaKdVN1} with $\rmb = 1/2$ 
are shown on fig.~\ref{WaveG1}.
\begin{figure}[h]
\caption{The case $N=1$.} \label{WaveG1}
2a. Fundamental domains.
$\phantom{mmmm}$ 2b. Cnoidal waves.$\displaystyle\phantom{\dfrac{A}{A}}\qquad\qquad\qquad\qquad\qquad$
\parbox[b]{0.3\textwidth}{
\includegraphics[width=0.28\textwidth]{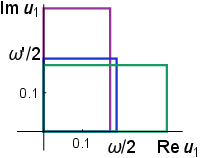}\\
$\quad \vphantom{\displaystyle \int} $ } 
\parbox[b]{0.47\textwidth}{
\includegraphics[width=0.43\textwidth]{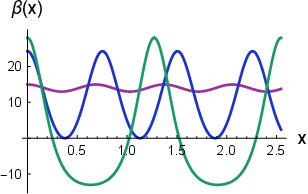}}
\parbox[b]{0.2\textwidth}{
\includegraphics[width=0.19\textwidth]{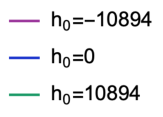} \\
$\quad \vphantom{\displaystyle \sum_y^T}$ }
\end{figure}
Note, if $\rmb$ is positive, than $\beta$ has maxima $- \rmb e_1$ at $\x = n \omega/\rmb$, $n\in\Integer$, 
and minima $- \rmb e_2$ at $\x = (n + \frac{1}{2})\omega/\rmb$, $n\in\Integer$. It follows from the fact that 
$\wp(\tfrac{1}{2}\omega') = e_1$, $\wp(\tfrac{1}{2}\omega + \tfrac{1}{2}\omega') = e_2$.

In terms of the Jacobi elliptic functions, we obtain the cnoidal wave, found by 
Korteweg and de Vries \cite{KdV}:
\begin{align*}
-\tfrac{1}{2}\wp(-\tfrac{1}{2} \x + \tfrac{1}{2}\omega') &+ \tfrac{1}{6} (e_1 + e_2 + e_3) \\
& =  -\tfrac{1}{2} \bigg(e_2 + (e_3-e_1) 
\frac{\dn \big(\imath K' \mp \tfrac{1}{2} \x \sqrt{e_3-e_1}; k\big)^2}
{\sn \big(\imath K' \mp \tfrac{1}{2} \x \sqrt{e_3-e_1}; k\big)^2} \bigg) \notag  \\
&=  -\tfrac{1}{2}\Big(e_2 -  (e_2-e_1) \cn \big(\tfrac{1}{2} \x \sqrt{e_3-e_1}; k\big)^2 \Big), \notag
\end{align*}
where $k^2 = (e_2-e_1)/(e_3-e_1)$, and $c_2 \rmb^{-2} = - (e_1 + e_2 + e_3)$.

\subsection{Genus 2}
The hamiltonian system of the KdV equation in $\M_2^\circ$ possesses
 the spectral curve 
 \begin{equation}\label{G2HamCanon} 
-w^2 + \rmb^2 z^5 + c_3 z^3 + c_2 z^2 + h_1 z + h_0 = 0.
 \end{equation}
Requiring real-valued solutions, 
we suppose that all  parameters $\rmb$, $c_3$, $c_2$, $h_1$, $h_0$ are real,
and  the spectral curve has five real branch points  $e_1 < e_2 < e_3 < e_4 < e_5$.
According to \eqref{JIPb}, uniformization of \eqref{G2HamCanon} is given by
\begin{gather}\label{UniformG2}
z^2 = z \wp_{1,1}(u) + \wp_{1,3}(u), \qquad 
w = - \tfrac{1}{2} \rmb \big(z \wp_{1,1,1}(u) + \wp_{1,1,3}(u)\big),
\end{gather}
where $u = (u_1,u_3)$, and the both components are complex, in general.

Bounded real-valued solutions in the case of $N=2$ are
\begin{align}\label{betaKdVN2}
&\beta (\x,\rmt) = -\rmb \wp_{1,1} (- \rmb \x + C_1, - \rmb \rmt + C_3),\\
\intertext{and by the reality conditions}
 \notag
\begin{split}
&C_1 = \tfrac{1}{2}\omega'_{1,1} + \tfrac{1}{2}\omega'_{1,2}, 
\qquad C_3 = \tfrac{1}{2}\omega'_{3,1} + \tfrac{1}{2}\omega'_{3,2},
\end{split}
\end{align}
where $(\omega'_{1,k},\omega'_{3,k})^t = \omega'_k$. First indices of the entries $\omega'_{2i-1,k}$
correspond to the labels of holomorphic differentials $\rmd u_{2i-1}$, $i=1$, $2$,
and $\omega'_1$, $\omega'_2$ are two columns of the period matrix $\omega'$.
On the subspace $-(\rmb \x,\rmb \rmt)^t +  \tfrac{1}{2}\omega'_{1} +  \tfrac{1}{2}\omega'_{2}$ 
the function $\wp_{1,1}$ is real-valued and bounded, with the critical values:
\begin{gather}\label{WP11CritValN2}
\begin{split}
&\wp_{1,1}(\tfrac{1}{2} \omega'_1 + \tfrac{1}{2} \omega'_2) = e_2  + e_3, \\
&\wp_{1,1}(\tfrac{1}{2} \omega_1 + \tfrac{1}{2} \omega'_1 + \tfrac{1}{2} \omega'_2) = e_1 + e_3,\\
&\wp_{1,1}(\tfrac{1}{2} \omega_2 + \tfrac{1}{2} \omega'_1 + \tfrac{1}{2} \omega'_2) = e_2 + e_4,\\
&\wp_{1,1}(\tfrac{1}{2} \omega_1 + \tfrac{1}{2} \omega_2 + \tfrac{1}{2} \omega'_1 + \tfrac{1}{2} \omega'_2) = e_1 + e_4.
\end{split}
\end{gather}

Let $\M_2^\circ$ be fixed by $4\rmb^2 =1$,  $c_4=0$, $c_3 = -15$,  $c_2=-20$.
Let the hamiltonians $h_1$, $h_0$ attains five values with different mutual positions of branch points:
\begin{tabular}{lll}
$h_1 = -3$, $h_0 = 1$ & $e_1 \approx -6.99$,  $e_2 \approx -1.13$, 
 $e_3 \approx -0.39$  $e_4 \approx 0.15$,  $e_5 \approx 8.35$; \\
& \parbox[]{0.62\paperwidth}{ 
\includegraphics[width=0.6\textwidth]{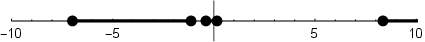} } 
 \end{tabular} \\
\begin{tabular}{lll}
$h_1 = 100$, $h_0 = -56$ & $e_1 \approx -5.69$,  $e_2 \approx -4.55$, 
 $e_3 \approx 0.72$  $e_4 \approx 1.50$,  $e_5 \approx 8.01$;\\
 & \parbox[]{0.62\paperwidth}{
 \includegraphics[width=0.6\textwidth]{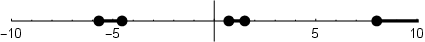} } 
 \end{tabular} \\
 \begin{tabular}{lll}
$h_1 = 100$, $h_0 = 81$ & $e_1 \approx -6.15$,  $e_2 \approx -3.52$, 
 $e_3 \approx -0.76$  $e_4 \approx 2.48$,  $e_5 \approx 7.94$;\\
 & \parbox[]{0.62\paperwidth}{
 \includegraphics[width=0.6\textwidth]{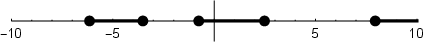} } \\
$h_1 = 100$, $h_0 = 156$ & $e_1 \approx -6.30$,  $e_2 \approx -2.71$, 
 $e_3 \approx -1.67$  $e_4 \approx 2.78$,  $e_5 \approx 7.90$;\\
 & \parbox[]{0.62\paperwidth}{
 \includegraphics[width=0.6\textwidth]{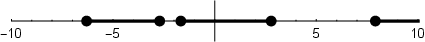} } \\
$h_1 = 210$, $h_0 = 460$ & $e_1 \approx -5.05$,  $e_2 \approx -3.86$, 
 $e_3 \approx -2.83$  $e_4 \approx 4.76$,  $e_5 \approx 6.99$.\\
 & \parbox[]{0.62\paperwidth}{
 \includegraphics[width=0.6\textwidth]{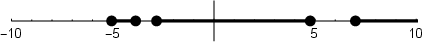} } 
\end{tabular}

The corresponding fundamental domains 
are shown in two projections: on $u_1$, and $u_3$, see fig.~\ref{FundDomG2}.
One can see a parallelogram spanned by $\frac{1}{2}\omega_{2i-1,1}$, $\frac{1}{2}\omega'_{2i-1,1}$
drawn with a dashed line,  a parallelogram spanned by $\frac{1}{2}\omega_{2i-1,2}$, $\frac{1}{2}\omega'_{2i-1,2}$
drawn with a dotted line, and a parallelogram spanned by $\frac{1}{2}(\omega_{2i-1,1} + \omega_{2i-1,2})$, 
$\frac{1}{2}(\omega'_{2i-1,1} + \omega'_{2i-1,2})$ drawn with a solid line.
\begin{figure}[h]
\caption{$N=2$.  Fundamental domains.} \label{FundDomG2}
3a. Projection on  $u_1$   $\phantom{mmmmmmmmmm}$  3b. Projection on  $u_3$
\parbox[b]{0.41\textwidth}{
\includegraphics[width=0.4\textwidth]{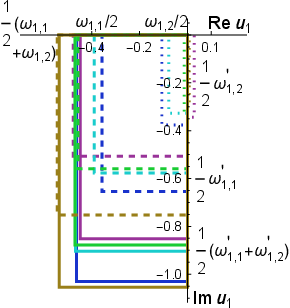} }
$\quad$
\parbox[b]{0.53\textwidth}{
\includegraphics[width=0.51\textwidth]{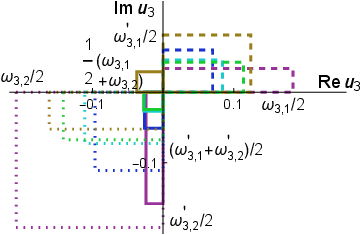}\\
$\quad  \vphantom{\displaystyle s^t}$\\
$\quad$}\\
\parbox[b]{0.24\textwidth}{
\includegraphics[width=0.23\textwidth]{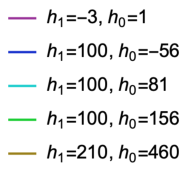}}$\qquad \qquad$
\parbox[b]{0.4\textwidth}{\includegraphics[width=0.4\textwidth]{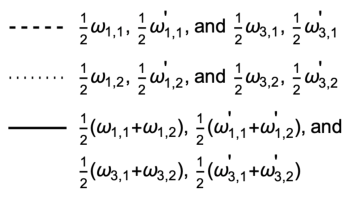} } 
\end{figure}

The corresponding shapes of  $\beta$, according to \eqref{betaKdVN2} with $\rmb = 1/2$, are presented 
on fig.~\ref{xtEvolG2} and \ref{CritValG2}. On fig.~\ref{xtEvolG2} the reader finds shapes of
$\beta(\x,0)$, $\x \in [0,-20 (\omega_{1,1}+\omega_{1,2})]$ (left column), and
$\beta(0,\rmt)$, $\rmt\in [0,20]$ (right column). On fig.~\ref{CritValG2} shapes of 
$\beta(\x,\rmt)$, $\x \in [0,-8 (\omega_{1,1}+\omega_{1,2})]$, $\rmt\in [0,-4 \omega_{3,2}]$
are presented for the chosen values of $h_1$ and $h_0$.
Dots on fig.~\ref{CritValG2} indicate critical values \eqref{WP11CritValN2}.  
\begin{figure}[h]
\caption{$N=2$. Quasi-periodic waves $\beta(\x,0)$ (left) and $\beta(0,\rmt)$ (right).} \label{xtEvolG2}
The cases $h_1 = -3$, $h_1 = 100$, $h_1 = 210$, from the top to the bottom.\\
\includegraphics[width=0.43\textwidth]{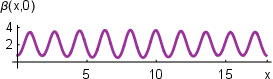}\ \ \ 
\includegraphics[width=0.53\textwidth]{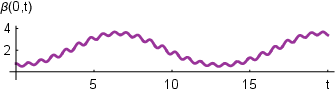} \\
\includegraphics[width=0.43\textwidth]{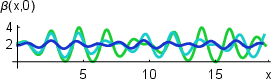}\ \ \ 
\includegraphics[width=0.53\textwidth]{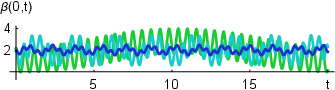} \\
\includegraphics[width=0.43\textwidth]{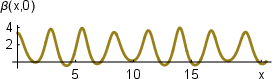}\ \ \ 
\includegraphics[width=0.53\textwidth]{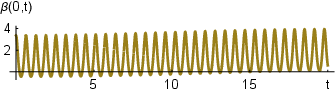} 
\end{figure}
\begin{figure}[h]
\caption{$N=2$. Quasi-periodic waves $\beta(\x,\rmt)$.} \label{CritValG2}
5a. $h_1=-3$, $h_0=1$.   $\phantom{mmmmmmmmmm}$  5b. $h_1=100$, $h_0=-56$. \\
\includegraphics[width=0.49\textwidth]{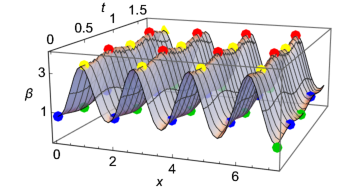}
\includegraphics[width=0.49\textwidth]{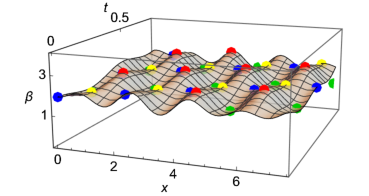} \\
5c. $h_1=100$, $h_0=81$.   $\phantom{mmmmmmmmmm}$  5d. $h_1=100$, $h_0=156$. \\
\includegraphics[width=0.49\textwidth]{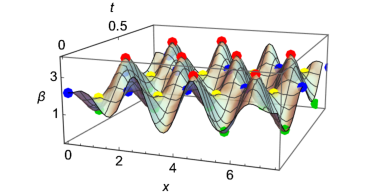}
\includegraphics[width=0.49\textwidth]{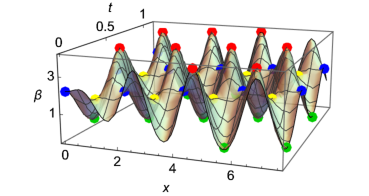}\\
5e. $h_1=210$, $h_0=469$.   $\phantom{mmmmmmmmmmmmmmmmmmmmm}$ \\
\includegraphics[width=0.4\textwidth]{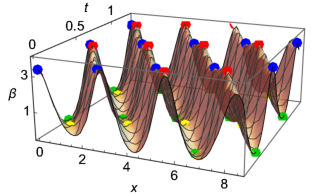}
$\qquad\quad$
\includegraphics[width=0.4\textwidth]{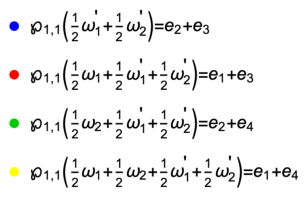} 
\end{figure}

\subsection{Genus $3$}
The hamiltonian system of the KdV equation in $\M_3^\circ$ possesses
 the spectral curve 
 \begin{equation}\label{G3HamCanon} 
-w^2 + \rmb^2 z^7 + c_5 z^5  + c_4 z^4 + c_3 z^3 + h_2 z^2 + h_1 z + h_0 = 0,
 \end{equation}
 with all real parameters, chosen in such a way that all seven  branch points are real.
According to \eqref{JIPb}, uniformization of \eqref{G3HamCanon} is given by
\begin{gather}\label{UniformG3}
\begin{split}
&z^3 = z^2 \wp_{1,1}(u) + z \wp_{1,3}(u) + \wp_{1,5}(u), \\
&w = - \tfrac{1}{2} \rmb \big(z^2 \wp_{1,1,1}(u) + z \wp_{1,1,3}(u) + \wp_{1,1,5}(u)\big),
\end{split}
\end{gather}
where $u = (u_1,u_3,u_5)$, and all components are complex, in general.

Bounded real-valued solutions in the case of $N=3$ are
\begin{align}\label{betaKdVN3}
&\beta (\x,\rmt) = -\rmb \wp_{1,1} (- \rmb \x + C_1, - \rmb \rmt + C_3, C_5),\\
\intertext{and by the reality conditions,}
 \notag
\begin{split}
&\qquad C_1 = \tfrac{1}{2}\omega'_{1,1} + \tfrac{1}{2}\omega'_{1,2} + \tfrac{1}{2}\omega'_{1,3}, \\
&\qquad C_3 = \tfrac{1}{2}\omega'_{3,1} + \tfrac{1}{2}\omega'_{3,2} + \tfrac{1}{2}\omega'_{3,3}, \\
&\qquad C_5 = -\mathrm{c}_5 + \tfrac{1}{2}\omega'_{5,1} + \tfrac{1}{2}\omega'_{5,2} + \tfrac{1}{2}\omega'_{5,3},
\quad \mathrm{c} \in \Real,
\end{split}
\end{align}
where $(\omega'_{1,k},\omega'_{3,k},\omega'_{5,k})^t = \omega'_k$. Recall that
$\omega'_1$, $\omega'_2$, $\omega'_3$  are three columns of the period matrix $\omega'$.
On $\tfrac{1}{2}  \omega'_1 +  \tfrac{1}{2}  \omega'_2 +  \tfrac{1}{2}  \omega'_3 + \ReN \bar{\Jac}$ 
 the function $\wp_{1,1}$ is real-valued and bounded, with the critical values:
\begin{gather}\label{WP11CritValN3}
\begin{split}
&\wp_{1,1}(\tfrac{1}{2} \omega'_1 + \tfrac{1}{2} \omega'_2 + \tfrac{1}{2} \omega'_3) = e_1 + e_4  + e_5, \\
&\wp_{1,1}(\tfrac{1}{2} \omega_1 + \tfrac{1}{2} \omega'_1 + \tfrac{1}{2} \omega'_2 + \tfrac{1}{2} \omega'_3) 
= e_2 + e_4 + e_5,\\
&\wp_{1,1}(\tfrac{1}{2} \omega_2 + \tfrac{1}{2} \omega'_1 + \tfrac{1}{2} \omega'_2 + \tfrac{1}{2} \omega'_3) 
= e_1 + e_3 + e_5,\\
&\wp_{1,1}(\tfrac{1}{2} \omega_1 + \tfrac{1}{2} \omega_2 
+ \tfrac{1}{2} \omega'_1 + \tfrac{1}{2} \omega'_2 + \tfrac{1}{2} \omega'_3) = e_2 + e_3 + e_5,\\
&\wp_{1,1}(\tfrac{1}{2} \omega_3 
+ \tfrac{1}{2} \omega'_1 + \tfrac{1}{2} \omega'_2 + \tfrac{1}{2} \omega'_3) = e_1 + e_4 + e_6,\\
&\wp_{1,1}(\tfrac{1}{2} \omega_1 + \tfrac{1}{2} \omega_3 
+ \tfrac{1}{2} \omega'_1 + \tfrac{1}{2} \omega'_2 + \tfrac{1}{2} \omega'_3) = e_2 + e_4 + e_6,\\
&\wp_{1,1}(\tfrac{1}{2} \omega_2 + \tfrac{1}{2} \omega_3 
+ \tfrac{1}{2} \omega'_1 + \tfrac{1}{2} \omega'_2 + \tfrac{1}{2} \omega'_3) = e_1 + e_3 + e_6,\\
&\wp_{1,1}(\tfrac{1}{2} \omega_1 + \tfrac{1}{2} \omega_2 + \tfrac{1}{2} \omega_3 
+ \tfrac{1}{2} \omega'_1 + \tfrac{1}{2} \omega'_2 + \tfrac{1}{2} \omega'_3) = e_2 + e_3 + e_6.
\end{split}
\end{gather}
However, projection to the subspace $- (\rmb \x,\rmb \rmt,\mathrm{c})^t +  \tfrac{1}{2}\omega'_{1} 
+  \tfrac{1}{2}\omega'_{2} +  \tfrac{1}{2}\omega'_{3}$, 
which serves as the domain of the quasi-periodic KdV solution $\beta$,
could contain not more than one of the values \eqref{WP11CritValN3}, if $\mathrm{c}$
is one of half-periods constructed from $\tfrac{1}{2}\omega_{1}$, $\tfrac{1}{2}\omega_{2}$, 
$\tfrac{1}{2}\omega_{3}$. 

Let $\M_3^\circ$ be fixed by $4\rmb^2 =1$,  $c_6=0$, $c_5 = -84$,  $c_4=-160$, $c_3 = 7250$.
Let the hamiltonians $h_2$, $h_1$, $h_0$ attain six values with different mutual positions of branch points.
Computed by \eqref{betaKdVN3} with $\rmb = 1/2$ and $\mathrm{c}_5=0$,
shapes of  $\beta(\x,0)$, $\x \in [0,-20 (\omega_{1,1}+\omega_{1,3})]$,
and $\beta(0,\rmt)$, $\rmt\in [0,10]$ or $\rmt\in [0,20]$,  are presented below. \\
\begin{tabular}{l}
$\vphantom{\displaystyle A^{\dfrac{A}{A}}}h_2 = 8178$, $h_1 = -202052$, $h_0 = 111126;\qquad$ 
$e_1 \approx -10.41$,  $e_2 \approx -10.30$, \\ $e_3 \approx -9.77$  $e_4 \approx 0.57$,  
 $e_5 \approx 6.70$,  $e_6 \approx 6.75$,  $e_7 \approx 16.47$; \\
\parbox[]{0.9\paperwidth}{ 
$\quad$ \includegraphics[width=0.84\textwidth]{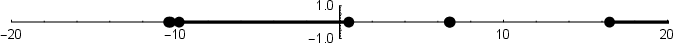} } \\
\includegraphics[width=0.37\textwidth]{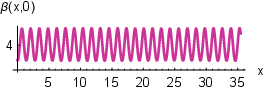}
\includegraphics[width=0.62\textwidth]{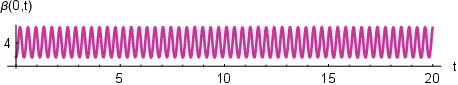} 
\end{tabular}\\
\begin{tabular}{l}
$h_2 = 16678$, $h_1 = -148832$, $h_0 = -338529;\qquad$ 
$e_1 \approx -11.44$,  $e_2 \approx -11.42$, \\ $e_3 \approx -5.67$  $e_4 \approx -2.26$,  
 $e_5 \approx 5.41$,  $e_6 \approx 9.33$,  $e_7 \approx 16.04$; \\
\parbox[]{0.9\paperwidth}{ 
$\quad$ \includegraphics[width=0.84\textwidth]{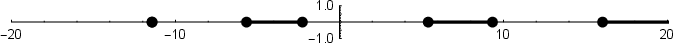} } \\
\includegraphics[width=0.37\textwidth]{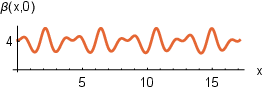}
\includegraphics[width=0.62\textwidth]{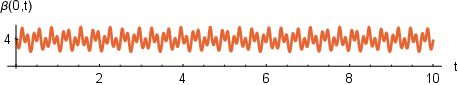} 
\end{tabular}\\
\begin{tabular}{l}
$h_2 = 12 678$, $h_1 = -172 935$, $h_0 = -12782;\qquad$  
$e_1 \approx -11.95$,  $e_2 \approx -9.07$, \\ $e_3 \approx -8.97$  $e_4 \approx -0.07$,  
 $e_5 \approx 4.98$,  $e_6 \approx 8.84$,  $e_7 \approx 16.23$;\\
\parbox[]{0.9\paperwidth}{
$\quad$ \includegraphics[width=0.84\textwidth]{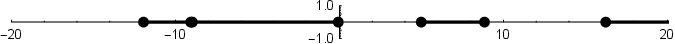} } \\
\includegraphics[width=0.37\textwidth]{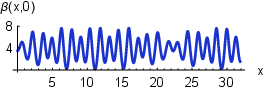}
\includegraphics[width=0.62\textwidth]{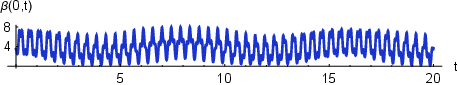} 
\end{tabular}\\
\begin{tabular}{l}
$h_2 = 21460$, $h_1 = -120322$, $h_0 = -287405;\qquad$
$e_1 \approx -12.88$,  $e_2 \approx -9.31$, \\ $e_3 \approx -6.21$  $e_4 \approx -2.16$,  
 $e_5 \approx 4.31$,  $e_6 \approx 10.58$,  $e_7 \approx 15.67$;\\
\parbox[]{0.9\paperwidth}{
$\quad$  \includegraphics[width=0.84\textwidth]{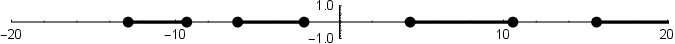} } \\
\includegraphics[width=0.37\textwidth]{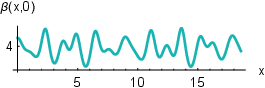}
\includegraphics[width=0.62\textwidth]{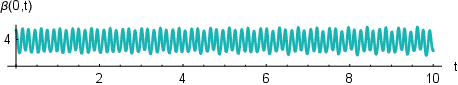} \\
$h_2 = 22460$, $h_1 = -81268$, $h_0 = -267380\qquad$ 
$e_1 \approx -12.21$,  $e_2 \approx -10.94$, \\
 $e_3 \approx -3.57$  $e_4 \approx -3.56$,  $e_5 \approx 3.62$, $e_6 \approx 11.23$,  $e_7 \approx 15.44$;\\
 \parbox[]{0.9\paperwidth}{
$\quad$  \includegraphics[width=0.84\textwidth]{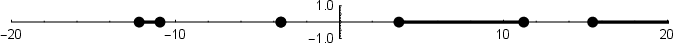} } \\
\includegraphics[width=0.37\textwidth]{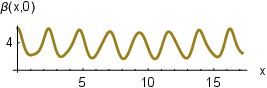}
\includegraphics[width=0.62\textwidth]{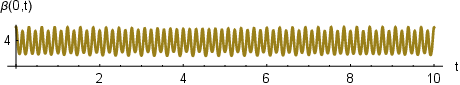} 
\end{tabular}\\
\begin{tabular}{l}
$h_2 = 32678$, $h_1 = -63757$, $h_0 = -372539\qquad$ 
$e_1 \approx -13.85$,  $e_2 \approx -6.64$, \\
 $e_3 \approx -5.10$  $e_4 \approx -5.08$,  $e_5 \approx 3.35$, $e_6 \approx 13.60$,  $e_7 \approx 13.72$.\\
\parbox[]{0.9\paperwidth}{
$\quad$  \includegraphics[width=0.84\textwidth]{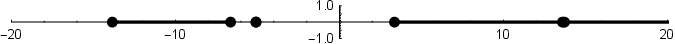} } \\
\includegraphics[width=0.37\textwidth]{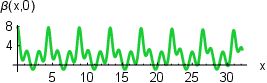}
\includegraphics[width=0.62\textwidth]{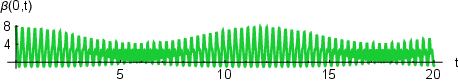} 
\end{tabular}

\medskip
On fig.~\ref{CritValG3} shapes of 
$\beta(\x,\rmt)$, $\x \in [0,6]$, $\rmt\in [0,0.6]$ at some chosen values of $h_2$, $h_1$, $h_0$ are presented. 
\begin{figure}[h]
\caption{$N=3$. Quasi-periodic waves $\beta(\x,\rmt)$,  $\x \in [0,6]$, $\rmt \in [0,0.6]$.} \label{CritValG3}
\parbox[]{0.33\paperwidth}{6a. $h_2 = 8178 \vphantom{\displaystyle a^{\dfrac{a}{a}}}$, $h_1 = -202052$, \\ 
$\phantom{mm}  h_0 = 111126$.}\parbox[]{0.33\paperwidth}{6b. $h_2 = 16678 \vphantom{\displaystyle a^{\dfrac{a}{a}}}$,
$h_1 = -148832$, \\ $\phantom{mm} h_0 = -338529$.} \\
\includegraphics[width=0.49\textwidth]{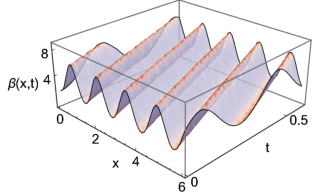}\ \ 
\includegraphics[width=0.49\textwidth]{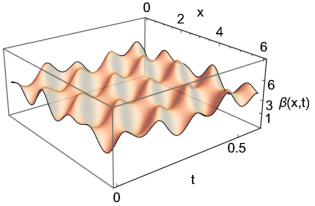} 
\parbox[]{0.33\paperwidth}{6c. $h_2 = 12 678$, $h_1 = -172 935$, \\ 
$\phantom{mm} h_0 = -12782$.}\parbox[]{0.33\paperwidth}{6c. $h_2 = 21460$, $h_1 = -120322$ \\ 
$\phantom{mm} h_0 = -287405$.} \\
\includegraphics[width=0.49\textwidth]{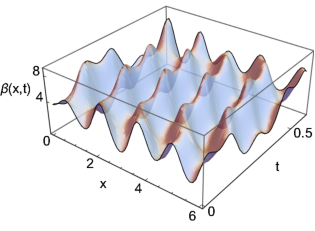}\ \ 
\includegraphics[width=0.49\textwidth]{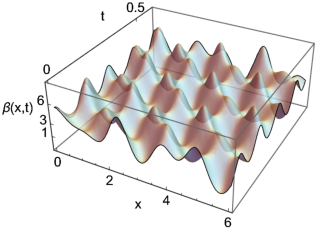}
\parbox[]{0.33\paperwidth}{6e. $h_2 = 22460$, $h_1 = -81268$, \\ 
$\phantom{mm} h_0 = -267380$.}\parbox[]{0.33\paperwidth}{6f. $h_2 = 32678$, 
$h_1 = -63757$, \\ $\phantom{mm} h_0 = -372539$.} \\
\includegraphics[width=0.49\textwidth]{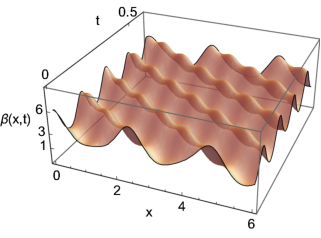}\ \ 
\includegraphics[width=0.49\textwidth]{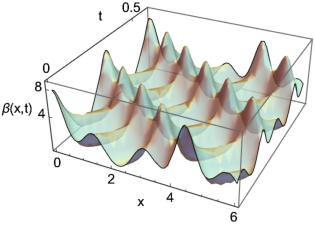}
\end{figure}



\begin{thebibliography}{99}

\bibitem{Adler78} Adler, M. On a trace functional for formal pseudo-differential 
operators and the symplectic structure of the Korteweg-devries type equations.
Invent. Math. \textbf{50}, 219--248 (1978).

\bibitem{AdlMoer80} Adler, M., Moerbeke P.
Completely integrable systems, Euclidean Lie algebras, and curves.
Advances in Math., \textbf{38}:3 (1980), 267--317 

\bibitem{AC2021} Agostini D.,  Chua L.,  
Computing theta functions with Julia, Journal of Software for Algebra and Geometry \textbf{11} (2021),
 41--51

\bibitem{bakerAF} Baker H.F., Abelian functions: Abel's theorem and the allied theory of theta functions, 
Cambridge, University press, Cambridge, 1897.

\bibitem{bakerMPF} Baker H. F. An introduction to the theory of multiply periodic functions,
Cambridge, University press, 1907.

\bibitem{bbeim1994}  Belokolos E. D., Bobenko A. I.,   Enolski V. Z., Its A. R., Matveev V.B.,
Algebro-geometric approach to nonlinear integrable equations., Springer-Verhag, Berlin, 1994

\bibitem{BerHol07}  Bernatska J., Holod P. On Separation of variables for Integrable equations of soliton type, 
J. Nonlin. Math. Phys. \textbf{14}:3 (2007),  353--374

\bibitem{BerCalc24} Bernatska J., Computation of $\wp$-functions on plane algebraic curves, preprint,
arXiv:2407.05632.

\bibitem{belHKF} Buchstaber V. M.,  Enolskii V. Z., and  Leykin D. V.,
Hyperelliptic Kleinian functions and applications, preprint ESI 380 (1996), Vienna


\bibitem{belMDSF} Buchstaber V. M.,  Enolskii V. Z.,  Leykin D. V., 
Multi-dimensional sigma functions, arXiv:1208.0990 

\bibitem{bl2005}  Buchstaber V. M.,  Leikin D. V., 
Addition laws on Jacobian varieties of plane algebraic curves, Nonlinear
dynamics, Collected papers, Tr. Mat. Inst. Steklova, \textbf{251} (2005),  54--126.

\bibitem{bl2008} Buchstaber V. M., and Leykin D. V. 
Solution of the problem of differentiation of abelian functions
over parameters for families of $(n, s)$-curves, 
Functional Analysis and Its Applications, \textbf{42}:4 (2008), 268--278

\bibitem{DP2011} Deconinck B., Patterson  M. S., Computing with plane algebraic curves and Riemann surfaces: 
The algorithms of the Maple package ``Algcurves'', in Computational approach to Riemann surfaces
 (Lect. Notes Math. Vol. 2013), eds. Bobenko A. I., Klein C., Springer, Berlin, 2011
 

\bibitem{Dub1981}  Dubrovin B. A., Theta functions and non-linear equations, Math. Surv. \textbf{36}:11   (1981),
11-92.

\bibitem{ZF1971} Faddeev L. D., Zakharov V. E., The Kortweg---deVries equation ---
a completely integrable Hamiltonian system, Functional Analysis and Its Applications, \textbf{5}:4 (1971), 18--27.

\bibitem{Fom1981}  Fomenko A. T.,  On symplectic structures and integrable systems
on symmetric spaces, Math. USSR Sb. \textbf{43} (1982) 235

\bibitem{FK2004} Frauendiener J., Klein C., Hyperelliptic theta-functions and spectral methods,
 Journal of Computational and Applied Mathematics \textbf{167} (2004), 193--218.

\bibitem{FK2006} Frauendiener J., Klein C., Hyperelliptic theta-functions and 
spectral methods: KdV and KP solutions, Lett. Math. Phys. \textbf{76}, (2006) 249--267.

\bibitem{GGKM1967}  Gardner C.,  Green J., Kruskal M., and Miura R., 
A method for solving the Korteweg---de Vries equation, Phys. Rev. Lett., \textbf{19} (1967),  1095--1098.

\bibitem{GGKM1974}  Gardner C.,  Green J., Kruskal M., and Miura R., 
Korteweg---de Vries equation and generalizations. VI. Methods for exact solution, 
Comm. Pure Appl. Math., \textbf{27} (1974),   97--133.

\bibitem{Hir1971} Hirota R. Exact solution of the Korteweg---de Vries equation for multiple collisions of solitons,
Phys. Rev. Lett. \textbf{27}:18 (1971),  1192--1194.

\bibitem{Holod82} Holod P., Integrable Hamiltonian systems of the orbits
of affine Lie groups and periodic problem for modified Korteweg---de Vries equations,  
preprint ITP-82-144R, 1982 (in Russian).

\bibitem{ItsMat1975}  Its A. R.,  Matveev V. B., Schr\"{o}dinger operators with finite-gap spectrum and 
$N$-soliton solutions of the Korteweg---de Vries equation,
Theor. Math. Phys. \textbf{23}:(1) (1975), 343--355



\bibitem{KK1971} Kadomtsev B. B., Karpman V. I., Nonlinear waves, 
Soviet Physics Uspekhi, \textbf{14}:1 (1971), 40--60

\bibitem{KdV} Korteweg D. J., De Vries, G., 
On the change of form of long waves advancing in a rectangular canal, and on a new 
type of long stationary waves, Philosophical Magazine, \textbf{39}:240 (1895), 422--443

\bibitem{Kos1979} Kostant B., The solution of a generalized Toda lattice and representation theory,
Advances in Math., \textbf{34} (1979), 196--338

\bibitem{KK1976} Kozel V. A. and Kotlyarov V. P. Explicit almost periodic solutions of the sine-Gordon equation
Dokl. Akad. Nauk Ukr. SSR, Ser. A, \textbf{10} (1976), 878--881 (in Russian); translation in English in arXiv:1401.4410


\bibitem{Lax1968} P. Lax, Integrals of nonlinear equations of evolution and solitary waves, 
Comm. Pure Appl. Math., \textbf{21}:2 (1968), 467--490.


\bibitem{MatsDNA2022} Matsutani S., Previato E., An algebro-geometric model for the shape of supercoiled DNA,
Physica D, (2022) \textbf{430}, 133073

\bibitem{Mats2023}  Matsutani S., A graphical representation of hyperelliptic KdV solutions,  	
arXiv:2310.14656

\bibitem{MF1978}  Mishchenko A. S., Fomenko A. T., Euler equations on finite-dimensional Lie groups,
Mathematics of the USSR: Izvestiya, textbf{12}:2 (1978),  371--389

\bibitem{MGK1968}  Miura R. M., Gardner C. S., and  Kruskal M., 
Kortweg-de Vries equation and generalizations, J. Math. Phys., \textbf{9}:8 (1968), 1202--1209.

\bibitem{Nov1974} Novikov S. P., The periodic problem for the Korteweg---de Vries equation,
Functional Analysis and Its Applications, \textbf{8}:3 (1974),  236--246


\end{thebibliography}
\end{document}